\ifx\XeTeXversion\undefined\pdfoutput=1\fi
\PassOptionsToPackage{hyphens}{url}   %
\documentclass[preprint,12pt]{elsarticle}
\biboptions{sort&compress}

\journal{Journal of Computational Physics}

\newcommand{\blind}{0}

\usepackage[T1]{fontenc}
\usepackage[utf8]{inputenc}
\usepackage{lmodern}
\usepackage{amsmath,amssymb,amsfonts,amsthm,mathtools}
\usepackage{graphicx}
\usepackage{booktabs}
\usepackage{array}
\usepackage{subcaption}
\usepackage{float}
\newlength{\figwidth}
\newlength{\fieldwidth}
\usepackage{caption}
\usepackage[shortlabels]{enumitem}
\usepackage{xcolor}
\usepackage{url}
\usepackage{pifont}      %
\usepackage{makecell}    %
\usepackage{tikz}
\usetikzlibrary{arrows.meta,positioning,calc,fit,backgrounds,shapes.geometric}
\usepackage{algorithm}
\usepackage{algpseudocode}
\usepackage[hidelinks]{hyperref}
\makeatletter
\def\ps@pprintTitle{\let\@oddhead\@empty\let\@evenhead\@empty
  \def\@oddfoot{\@myfooterfont Preprint submitted to \ifx\@journal\@empty Elsevier\else\@journal\fi\hfill\@date}%
  \let\@evenfoot\@oddfoot}
\makeatother
\usepackage{lineno}
\usepackage{natbib}   %

\algrenewcommand\algorithmicrequire{\textbf{Input:}}
\algrenewcommand\algorithmicensure{\textbf{Output:}}

\newtheorem{assumption}{Assumption}
\theoremstyle{remark}

\theoremstyle{plain}

\newtheorem{proposition}{Proposition}

\theoremstyle{definition}

\newcommand{\Gr}{\mathrm{Gr}}

\newcommand{\vect}[1]{\boldsymbol{#1}}

\begin{document}

\begin{frontmatter}

\if0\blind
\title{A Product-Grassmann Tucker Surrogate for Parametric Field Prediction}
\author[gatech]{Melika Baghi}
\ead{mbaghi3@gatech.edu}
\author[gatech]{Xiao Liu}
\ead{xliu976@gatech.edu}
\author[gatech]{Kamran Paynabar\corref{corresponding}}
\ead{kpaynabar3@gatech.edu}
\cortext[corresponding]{Corresponding author}
\address[gatech]{H. Milton Stewart School of Industrial and Systems Engineering, Georgia Institute of Technology, Atlanta, GA 30332, USA}
\fi

\if1\blind
\title{A Product-Grassmann Tucker Surrogate for Parametric Field Prediction}
\author{Anonymous authors}
\fi

\begin{abstract}
Many-query applications such as design optimization, uncertainty quantification and digital twins call for repeated full-field predictions from expensive partial differential equation (PDE) simulations. Grassmannian reduced-order models interpolate parameter-dependent subspaces but commonly vectorize the spatial field, and the ambient dimension then grows with the product of the grid sizes. Tensor-based models preserve the multiway structure but often use shared mode-wise bases or a fixed core across parameters. We propose a Tucker-based surrogate that keeps the spatial directions and time as separate modes. The factor subspaces are interpolated on a product Grassmann manifold, Procrustes alignment places the Tucker cores in one coordinate system, and a truncated singular value decomposition models their variation. Across five PDE benchmarks in two and three spatial dimensions, including public computational fluid dynamics data, the interpolation size is reduced by factors of $13$ to $482$ and the stored model by factors of $13$ to $464$ relative to a single-subspace Grassmannian surrogate with the same local interpolation. Cached queries run $3.3$ to $8.5$ times faster, the mean errors over ten resampled splits differ by at most $0.22$ percentage points ($0.36$ when every model choice is re-selected inside each split), and the prediction-error terms of the two methods differ by less than $0.1$ percentage points. The remaining accuracy differences are consistent with the larger representation error of the Tucker format. In a three-dimensional grid study the interpolation-size ratio rises to $2{,}260$ and the query speedup to $11.9$ at $64^3$.
\end{abstract}

\begin{keyword}
reduced-order modeling \sep tensor decomposition \sep Grassmann manifold \sep surrogate modeling \sep parametric partial differential equations
\end{keyword}

\end{frontmatter}

\section{Introduction}
\label{sec:introduction}

Many-query applications such as design optimization, uncertainty quantification and digital twins evaluate a parameterized physical model again and again~\citep{deo2025physicsguided,musekamp2025al4pde}. Increasingly, they need the full spatial or spatiotemporal field, not a few scalar quantities~\citep{tian2025distortion,tian2025temperature,thiamchaiboonthawee2026flare}. A surrogate for this setting has to approximate an expensive simulator and, at the same time, represent and predict its high-dimensional output at low cost~\citep{giacomini2026surrogates}.

Let $\mu$ collect the material properties, boundary conditions or operating settings of the model. For a two-dimensional spatial domain observed over time, one simulation produces a third-order tensor
\begin{equation}
\label{eq:intro_state_tensor}
\mathcal{X}(\mu)\in\mathbb{R}^{X\times Y\times T},
\end{equation}
where $X$ and $Y$ index the spatial grid and $T$ the stored time steps. As $\mu$ varies, the dominant structures of the field and their amplitudes change together.

Projection-based reduced-order models commonly reshape the spatial field into a vector of length $XY$ and represent the solution in a low-dimensional subspace~\citep{mosquera2019idw,oulghelou2021bicitsgm,friderikos2022stability}. When that subspace varies with the parameter, Grassmannian interpolation treats it as a point on a manifold. Each local basis is represented by the subspace it spans, mapped to a tangent space, interpolated there and mapped back to the manifold. Once the spatial dimensions are reshaped into a single axis, these operations act in an ambient space of dimension $XY$, and the object that is stored and interpolated grows with the product of the spatial dimensions. This study measures the cost and accuracy of keeping the tensor modes separate, relative to the vectorized formulation, and examines the representation and prediction errors separately.

This work preserves the multiway structure of the field. Each training field is compressed by a Tucker decomposition, with one factor matrix for each of the $X$, $Y$ and $T$ modes. The column spaces of the three factors define a point on the product manifold
\begin{equation}
\label{eq:intro_product_manifold}
\mathcal{M}=\Gr(r_X,X)\times\Gr(r_Y,Y)\times\Gr(r_T,T),
\end{equation}
where $\Gr(r,n)$ is the set of $r$-dimensional subspaces of $\mathbb{R}^n$.

Preserving the tensor structure raises one further difficulty. A Tucker decomposition is not unique: equivalent factor bases differ by orthogonal rotations that the core absorbs, and cores computed at different parameters therefore lie in different coordinate systems~\citep{DeLathauwer2000a,Kolda2009}. The training factors are aligned to a common reference by orthogonal Procrustes rotations~\citep{Schoenemann1966,Higham1986}, their cores are rotated to match, and a truncated singular value decomposition (SVD) of the aligned core residuals gives a low-dimensional model of the core variation, the low-rank aligned-core model. The factor subspaces and the core coordinates are then predicted with the same local parameter-space weights. The present implementation uses one reference tangent space and is intended for parameter ranges in which the mode-wise subspaces remain close enough to that reference.

The mode-wise representation also shrinks the Grassmann interpolation problem (Figure~\ref{fig:concept}). At comparable ranks the interpolated object scales with the individual mode sizes in place of their spatial product. The mode-wise and the single-subspace surrogates still reconstruct the same full field, and that stage bounds the speedup of the whole query.

\begin{figure}[t]
\centering
\resizebox{\linewidth}{!}{%
\begin{tikzpicture}[font=\normalsize, >=Latex,
  box/.style ={draw, rounded corners=2pt, align=center, minimum height=1.05cm, inner sep=5pt, fill=blue!7},
  good/.style={draw, rounded corners=2pt, align=center, minimum height=1.05cm, inner sep=5pt, fill=green!10},
  warn/.style={draw, rounded corners=2pt, align=center, minimum height=1.05cm, inner sep=5pt, fill=red!10},
  recb/.style={draw, rounded corners=2pt, align=center, minimum height=1.05cm, inner sep=5pt, fill=orange!14},
  lab/.style ={font=\normalsize\bfseries},
  ar/.style  ={->, thick}]
\fill[red!3,rounded corners]   (-0.7, 1.05) rectangle (17.4, 3.25);
\fill[green!3,rounded corners] (-0.7,-3.55) rectangle (17.4, 0.55);
\node[lab,anchor=west] at (-0.6,3.65) {(a) Single-subspace Grassmannian surrogate};
\node[box]  (fa) at (0.8, 2.15) {$\mathcal{X}(\mu)$};
\node[warn] (ra) at (4.0, 2.15) {Reshape\\$X\times Y\rightarrow XY$};
\node[warn] (ga) at (10.1,2.15) {One subspace\\$\mathrm{Gr}(r_s,XY)$};
\node[recb] (xa) at (15.7,2.15) {Reconstruction};
\draw[ar] (fa)--(ra); \draw[ar] (ra)--(ga); \draw[ar] (ga)--(xa);
\node[lab,anchor=west] at (-0.6,0.95) {(b) Product-Grassmann Tucker surrogate};
\node[box]  (fb) at (0.8,-1.5) {$\mathcal{X}(\mu)$};
\node[good] (tb) at (4.0,-1.5) {Tucker\\decomposition};
\node[good] (pb) at (10.1,-0.35) {Factor subspaces\\$\mathrm{Gr}(r_X,X)\times\mathrm{Gr}(r_Y,Y)\times\mathrm{Gr}(r_T,T)$};
\node[good] (cb) at (10.1,-2.65) {Aligned core model};
\node[font=\small] at (10.1,-1.5) {same weights $w(\mu)$};
\node[recb] (xb) at (15.7,-1.5) {Reconstruction};
\draw[ar] (fb)--(tb);
\draw[ar] (tb.east) -- ++(0.55,0) |- (pb.west);
\draw[ar] (tb.east) -- ++(0.55,0) |- (cb.west);
\draw[ar] (pb.east) -| (xb.north);
\draw[ar] (cb.east) -| (xb.south);
\end{tikzpicture}%
}
\caption{The single-subspace Grassmannian surrogate (a) and the product-Grassmann Tucker surrogate (b).}
\label{fig:concept}
\end{figure}

The main contributions of this paper are the following.
\begin{itemize}[leftmargin=1.4em,itemsep=2pt,topsep=2pt]
\item A product-Grassmann Tucker surrogate predicting parameter-dependent mode-wise subspaces.
\item A reference-aligned low-rank model of parameter-dependent Tucker cores with a compatible reconstruction.
\item Cost and error analyses validated on five partial differential equation (PDE) benchmarks, including public computational fluid dynamics (CFD) data and a three-dimensional scaling study.
\end{itemize}

\section{Related Work}
\label{sec:related_work}

\subsection{Parametric reduced-order models and Grassmann interpolation}

Classical surrogates such as radial basis functions and Gaussian processes target scalar or moderate-dimensional outputs~\citep{Buhmann2003,Rasmussen2006}, while each parameter here yields a high-dimensional field. Projection-based reduced-order models represent such fields in low-dimensional spaces built from simulation snapshots~\citep{Benner2015,Quarteroni2016,Hesthaven2016,Benner2017Book,Sirovich1987,Berkooz1993,Willcox2002,Rowley2004}. A single global basis grows inefficient when the dominant solution structure changes across the parameter domain, which motivates local and parameter-dependent reduced spaces~\citep{Amsallem2012LocalROM}. Grassmannian methods represent these spaces as points on a Grassmann manifold~\citep{Edelman1998,Absil2008,Kobayashi1963} and interpolate them through tangent coordinates~\citep{Amsallem2008,Son2013}, an approach also applied to reduced operator matrices on other matrix manifolds~\citep{amsallem2009interpolating,Amsallem2011}. Later work develops the geometry, stability and extensions of that procedure~\citep{Bendokat2020,zimmermann2021manifold,friderikos2022stability,Zhang2022GPS,ciaramella2026grassmann,sreekumar2026roadmap}, reads parametric sensitivities as tangent vectors~\citep{sato2025parametric} and raises the interpolation order~\citep{mosquera2021neville,niu2026stable,jensen2025maxvol}. Other extensions use alternative manifold interpolants~\citep{oulghelou2021barycentric,elomari2025stiefel,mosquera2024kriging}, learn the map from parameter to subspace~\citep{Giovanis2020GrassmannGP,liu2023regression,liu2026projected}, use Grassmann distances for adaptive sampling~\citep{giovanis2018uncertainty} or carry the interpolation over to dynamic mode decomposition modes~\citep{du2026dmd}.

Many Grassmannian reduced-order models vectorize the spatial state first~\citep{Amsallem2008,Son2013,mosquera2019idw,oulghelou2021bicitsgm,friderikos2022stability}. For a two-dimensional field the spatial basis then lives on $\Gr(r_s,XY)$, and the ambient dimension of the manifold grows with the product of the grid dimensions.

Operator inference learns the reduced operators from snapshot data~\citep{peherstorfer2016opinf}, and neural operators, transformer solvers and hybrid neural reduced-order models learn the parameter-to-field map itself~\citep{li2021fourier,lu2021deeponet,li2023scalable,wu2024transolver,luo2025transolverpp,wen2025gaot,fresca2022poddlrom}.

\subsection{Tensor-structured reduced-order modeling}

Tensor decompositions preserve the multiway organization that reshaping the spatial dimensions into one axis removes. The Tucker decomposition, computed through the higher-order singular value decomposition (HOSVD), expresses a tensor as a compact core and one factor matrix per mode~\citep{DeLathauwer2000a,DeLathauwer2000b,Kolda2009,Vannieuwenhoven2012}, and tensor-train and hierarchical formats extend it to larger or higher-order data~\citep{Oseledets2011,Hackbusch2009,Grasedyck2010,Nouy2017}. With the parameters as additional tensor modes, low-rank formats solve parameterized linear systems and approximate parameter-dependent outputs~\citep{kressner2011lowrank,ballani2015hierarchical}. Several reduced-order models treat the parameter as one more mode of the stacked training data, build space-time trial subspaces from tensor decompositions of the snapshots, or regress the coefficients of common mode-wise bases on the inputs~\citep{Mamonov2022,Mamonov2024,Olshanskii2024,islam2026tensorial,budzinskiy2025cross,choi2019spacetime,gopalan2022hosvd,karasozen2022nonintrusive,barragan2025motif}. These models preserve the tensor structure, and many of them share their physical-mode bases across the parameter domain.

With the parameter as a mode, the parameter dependence can also be carried by an interpolated parameter-mode factor. Sasagawa and Tanaka interpolate its entries with Akima splines while the core and the spatial and temporal factors are fixed~\citep{sasagawa2023tucker}, and a hybrid variant interpolates it by Fourier series and corrects the interpolation artifacts with a neural network~\citep{shaa2026hybrid}. Product-Grassmann representations describe the factor subspaces of a single Tucker decomposition in Newton, trust-region and preconditioned optimization methods~\citep{elden2009newton,ishteva2011best,desterck2016tucker} and in gesture recognition~\citep{lui2012product}, with no parametric prediction.

Closest to the present construction are space-time methods in which both the spatial and the temporal basis vary with the parameter. Oulghelou and Allery interpolate the two bases on Grassmann manifolds and align the interpolated modes with two orthogonal calibration matrices~\citep{oulghelou2021bicitsgm}. Friderikos et al.\ interpolate the spatial and the temporal basis separately on Grassmann manifolds~\citep{friderikos2020grassmann}. Their later method follows the two bases as curves on Stiefel manifolds, interpolates a small square matrix built from both bases and the singular values, and makes the singular vectors unique by an oriented singular value decomposition~\citep{friderikos2021spacetime}. In these methods one basis spans all spatial degrees of freedom. The present surrogate extends this construction to three and four modes, with a factor subspace for each spatial axis and for time on a product Grassmann manifold. Its full Tucker core is aligned to a reference mean and modeled by a truncated singular value decomposition of its residuals, and the interpolated object grows with the sum of the mode sizes. Tucker is the format used because each physical mode has an orthonormal factor subspace that is interpolated directly on a Grassmann manifold. By contrast, the subspaces of a tensor train mix a physical axis with a neighboring rank, and the factors of a canonical polyadic (CP) decomposition are not orthonormal bases~\citep{Kolda2009,Oseledets2011}.

Tensor regression uses Tucker or CP structure on coefficient tensors for statistical efficiency~\citep{Zhou2013,Li2018TuckerRegression,fang2017image}. Proper generalized decomposition and its hierarchical deep-learning versions build a separated representation from the governing equations before any snapshot exists~\citep{chinesta2011pgd,zhang2022hidennpgd,lu2023chidenn}, and interpolating neural networks with tensor decomposition extend that construction~\citep{park2025inn,guo2025inntd,guo2025taps}.

\subsection{Basis alignment}

Tucker factors are not unique, since right-multiplying a factor by an orthogonal matrix and applying the inverse rotation to the corresponding core mode leaves the reconstructed tensor unchanged~\citep{DeLathauwer2000a,Kolda2009,koch2010dynamical,uschmajew2013geometry}. Across parameters the ambiguity has to be removed before the cores are compared. Orthogonal Procrustes alignment is the standard way to place equivalent bases in one orientation~\citep{Schoenemann1966,Higham1986}, and alignment before interpolation has been applied to vectorized reduced bases through a consistent mode ordering~\citep{goutaudier2023orbi} and through the removal of sign and rotation ambiguities~\citep{du2026mrpwi}. A recent generative model aligns the Tucker factors of spatiotemporal data to medoid anchors by the same Procrustes rotations~\citep{lyu2026diffats}.

\subsection{Limits of existing approaches}

Three limits remain in the approaches reviewed above. Grassmannian surrogates that vectorize the field interpolate one subspace whose ambient dimension is the product of the grid sizes. Tensor-structured models keep the modes, but they typically share the physical-mode bases across the parameter domain and carry the parameter dependence in the core, in reduced coefficients or in a parameter-mode factor. The space-time methods let both bases vary with the parameter, yet one basis still spans all spatial degrees of freedom. In the present surrogate every Tucker mode has its own parameter-dependent subspace on a product Grassmann manifold, and the Tucker core, aligned to a common reference, varies with the parameter as well.

\section{Methodology}
\label{sec:method}

The surrogate predicts a full spatiotemporal field at a new parameter from a set of training simulations. Each training field is first compressed by a Tucker decomposition. The mode-wise factor subspaces are then represented and interpolated on a product Grassmann manifold, and the Tucker cores are aligned to a common coordinate frame and compressed into a low-rank aligned-core model. At a query, the predicted factors and core are combined into the field.

\subsection{Problem setup and Tucker representation}
\label{ssec:tucker}

Let $\mathcal{P}\subset\mathbb{R}^d$ be the parameter domain. For every $\mu\in\mathcal{P}$ the high-fidelity solver returns the spatiotemporal field $\mathcal{X}(\mu)\in\mathbb{R}^{X\times Y\times T}$ of \eqref{eq:intro_state_tensor}, where $X$ and $Y$ are the two spatial grid dimensions and $T$ the number of stored time steps. The training set holds $N$ simulations, $\{(\mu_i,\mathcal{X}(\mu_i))\}_{i=1}^N$. Every training field keeps its three-mode structure through a truncated Tucker decomposition,
\begin{equation}
\label{eq:tucker_approx}
\mathcal{X}(\mu_i)\approx
\mathcal{G}_i \times_1 U_{X,i} \times_2 U_{Y,i} \times_3 U_{T,i},
\end{equation}
where $U_{X,i}\in\mathbb{R}^{X\times r_X}$, $U_{Y,i}\in\mathbb{R}^{Y\times r_Y}$ and $U_{T,i}\in\mathbb{R}^{T\times r_T}$ have orthonormal columns, $\mathcal{G}_i\in\mathbb{R}^{r_X\times r_Y\times r_T}$ is the Tucker core, with $r_X<X$, $r_Y<Y$ and $r_T\le T$, and $\times_n$ is the mode-$n$ product~\citep{Kolda2009}. The ranks set both the compression and the size of the online problem.

The Tucker representation separates two forms of parameter dependence: the factor matrices describe how the dominant spatial and temporal subspaces change with $\mu$, and the core describes how the coordinates along those subspaces interact.

\subsection{Product-Grassmann representation of the factor subspaces}
\label{ssec:product}

An orthogonal rotation of the columns of a factor matrix in \eqref{eq:tucker_approx} leaves the spanned subspace unchanged. The prediction target is therefore the column space of each factor,
\[
[U_{X,i}]\in\Gr(r_X,X),\qquad
[U_{Y,i}]\in\Gr(r_Y,Y),\qquad
[U_{T,i}]\in\Gr(r_T,T),
\]
where $\Gr(r,n)$ is the Grassmann manifold of $r$-dimensional subspaces of $\mathbb{R}^n$. Together the three mode-wise subspaces define the product point
\begin{equation}
\label{eq:product_point}
\mathbf{U}_i
=
\bigl([U_{X,i}],[U_{Y,i}],[U_{T,i}]\bigr)
\in
\mathcal{M}
:=
\Gr(r_X,X)
\times
\Gr(r_Y,Y)
\times
\Gr(r_T,T).
\end{equation}
A field with a third spatial axis adds one Grassmann factor to the product and one index to the core, and the construction is otherwise unchanged.

For comparison, the single-subspace representation reshapes the spatial dimensions into one axis,
\begin{equation}
\label{eq:flattened_snapshot}
M_i
=
\mathrm{reshape}\bigl(\mathcal{X}(\mu_i),[XY,T]\bigr)
\in
\mathbb{R}^{(XY)\times T},
\end{equation}
takes a rank-$r_s$ spatial basis $U_{s,i}\in\mathbb{R}^{(XY)\times r_s}$ of $M_i$, and represents it by the single point $[U_{s,i}]\in\Gr(r_s,XY)$. Its interpolation runs in one space of dimension $XY$, whereas the mode-wise representation runs in three spaces of dimensions $X$, $Y$ and $T$.

\subsection{Reference alignment of Tucker factors and cores}
\label{ssec:gauge}

Representing the factors by their subspaces removes rotations from the geometric description, but the Tucker core still depends on the basis chosen inside each subspace. Replacing the factors by $U_{X,i}Q_X$, $U_{Y,i}Q_Y$ and $U_{T,i}Q_T$ with orthogonal $Q_\bullet\in O(r_\bullet)$ leaves the three Grassmann points and the reconstructed tensor unchanged, and the core absorbs the inverse rotations. Part of the difference between cores computed at nearby parameters is then a difference of coordinates alone. To remove it, all training decompositions are placed in one reference frame before the cores are modeled.

The reference $\mathbf{U}_0=\bigl([U_{X,0}],[U_{Y,0}],[U_{T,0}]\bigr)\in\mathcal{M}$ is a mean of the training product points~\citep{karcher1977,afsari2011}. For each mode $\bullet\in\{X,Y,T\}$ one orthonormal representative $U_{\bullet,0}$ of the reference subspace is fixed and reused unchanged through alignment and through every query, since the aligned-core entries are coordinates relative to its columns. For training sample $i$, the rotation that brings its basis closest to the reference is the solution of the orthogonal Procrustes problem~\citep{Schoenemann1966,gower2004procrustes},
\begin{equation}
\label{eq:procrustes_problem}
Q_{\bullet,i}
=
\arg\min_{Q\in O(r_\bullet)}
\|U_{\bullet,i}Q-U_{\bullet,0}\|_F .
\end{equation}
With the singular value decomposition of the overlap matrix,
\begin{equation}
\label{eq:overlap_svd}
U_{\bullet,0}^\top U_{\bullet,i}
=
L_{\bullet,i}\Sigma_{\bullet,i}R_{\bullet,i}^\top ,
\end{equation}
the minimizer is its orthogonal part,
\begin{equation}
\label{eq:procrustes_rotation}
Q_{\bullet,i}=R_{\bullet,i}L_{\bullet,i}^\top ,
\qquad \bullet\in\{X,Y,T\}.
\end{equation}
The aligned factor is $\bar U_{\bullet,i}=U_{\bullet,i}Q_{\bullet,i}$. To keep the reconstructed tensor unchanged, the same rotations are transferred to the core,
\begin{equation}
\label{eq:gauge_aligned_core}
\bar{\mathcal{G}}_i
=
\mathcal{G}_i
\times_1 Q_{X,i}^\top
\times_2 Q_{Y,i}^\top
\times_3 Q_{T,i}^\top .
\end{equation}
This transformation changes only the coordinates, and differences among the aligned cores $\bar{\mathcal{G}}_i$ then reflect the parameter dependence and not the orientation of the Tucker factors.

The principal angles between two subspaces with orthonormal representatives $U$ and $\widetilde U$ are $\theta_j=\arccos\sigma_j(U^\top\widetilde U)$, reported throughout as multiples of $\pi/2$, the angle at which the Grassmann logarithm ceases to be single-valued. The rotation is unique when every angle between a training subspace and the reference is strictly below $\pi/2$, since the overlap $U_{\bullet,0}^{\top}U_{\bullet,i}$ is then nonsingular. Any reference meeting this condition supports the construction, and the mean is used because it keeps the training angles small. After alignment the cores are ordinary Euclidean tensors. The factor subspaces are represented by tangent coordinates, and these do not depend on the alignment because the logarithm map acts on subspaces.

\subsection{Tangent-space prediction of the factor subspaces}
\label{ssec:tangent}

The factor subspaces are interpolated in the tangent space at the reference $\mathbf{U}_0$, following standard Grassmann interpolation~\citep{Amsallem2008,Son2013,Bendokat2020}. Each training product point is mapped to that tangent space by the Grassmann logarithm, the initial velocity of the shortest geodesic from $\mathbf{U}_0$ to $\mathbf{U}_i$,
\begin{equation}
\label{eq:tangent_coords}
\mathbf{v}_i=\log_{\mathbf{U}_0}(\mathbf{U}_i).
\end{equation}
Because $\mathcal{M}$ is a Cartesian product, the logarithm acts on the three modes independently, and each mode-wise component is represented by its tangent coordinate, a matrix $Z_{\bullet,i}$ of the factor's size with $U_{\bullet,0}^\top Z_{\bullet,i}=0$. That condition removes rotations inside the reference subspace and keeps only the directions that tilt the subspace away from it. The closed form of the logarithm returns these representatives, collected as the tangent coordinates
\begin{equation}
\label{eq:tangent_triple}
\mathbf{Z}_i=\bigl(Z_{X,i},Z_{Y,i},Z_{T,i}\bigr),
\end{equation}
which the surrogate stores.

At a query parameter $\mu$ the tangent coordinates are predicted by a weighted average of the training coordinates,
\begin{equation}
\label{eq:tangent_regressor}
\hat{\mathbf{Z}}(\mu)
=
\sum_{i=1}^N w_i(\mu)\mathbf{Z}_i
=
\bigl(
\hat Z_X(\mu),\hat Z_Y(\mu),\hat Z_T(\mu)
\bigr),
\end{equation}
the tangent coordinate of the predicted tangent vector $\hat{\mathbf{v}}(\mu)$. One set of scalar weights serves all three modes and places the $X$-, $Y$- and $T$-mode predictions in one parameter-space neighborhood. The same weights drive the core prediction below. Each parameter coordinate is centered and scaled to unit variance with the training statistics, the $k$ nearest training parameters in that standardized space are selected, and a Gaussian kernel $w_i(\mu)\propto\exp(-d_i^2/(2h^2))$ over their distances $d_i$ is normalized to sum to one. The bandwidth $h$ is set per query to the larger of the standard deviation and half the mean of the $k$ distances, and training parameters outside the neighborhood receive zero weight.

The predicted tangent vector is a convex combination of the neighbors' tangent vectors. Its largest principal angle cannot exceed the largest angle among those neighbors, and the prediction lies where the logarithm is single-valued. As a numerical safeguard, the implementation rescales a prediction whose largest angle exceeds $0.9\,(\pi/2)$ back to that threshold before the exponential map, and the supplementary appendix reports how often this happens.

The exponential map returns the predicted tangent vector to the product manifold,
\begin{equation}
\label{eq:exp_prediction}
\hat{\mathbf{U}}(\mu)
=
\exp_{\mathbf{U}_0}\bigl(\hat{\mathbf{v}}(\mu)\bigr),
\end{equation}
again one mode at a time and through its closed form, and gives orthonormal representatives $\hat U_X(\mu)$, $\hat U_Y(\mu)$ and $\hat U_T(\mu)$ of the predicted subspaces.

\subsection{Low-rank prediction of the aligned core}
\label{ssec:prediction}

Because the aligned cores $\bar{\mathcal{G}}_i$ are all written in the common frame of the alignment step, their means, differences and principal components are comparable across parameter values. Principal component analysis of the vectorized residuals about the mean aligned core then gives a low-rank aligned-core model, in the spirit of the tensorial reduced-order modeling of \citet{Mamonov2022,Mamonov2024}.

The mean aligned core,
\begin{equation}
\label{eq:core_mean}
\bar{\mathcal{G}}_{\mathrm{mean}}=\frac{1}{N}\sum_{i=1}^N \bar{\mathcal{G}}_i ,
\end{equation}
holds the part of the core that does not vary with the parameter. Each residual is vectorized in a fixed entrywise order, $G(:,i)=\mathrm{vec}\bigl(\bar{\mathcal{G}}_i-\bar{\mathcal{G}}_{\mathrm{mean}}\bigr)$, which gives $G\in\mathbb{R}^{r_Xr_Yr_T\times N}$. A truncated singular value decomposition $G\approx U_p\Sigma_pV_p^\top$ keeps the $r_p$ leading residual-core directions, and reshaping the columns of $U_p\Sigma_p$ into their tensor shape gives
\begin{equation}
\label{eq:coc_basis}
\mathcal{G}_{\mathrm{basis}}
=
\mathrm{reshape}\bigl(U_p\Sigma_p,\,[r_X,r_Y,r_T,r_p]\bigr)
\in\mathbb{R}^{r_X\times r_Y\times r_T\times r_p},
\end{equation}
whose fourth index runs over the retained patterns, shared by every training run. Row $i$ of $V_p$ holds the coordinates of run $i$ in those patterns, and the parameter dependence lives in these coordinates. The rank $r_p$ sets the compression of the core model.

At a query only the coordinates are predicted, with the pattern basis held fixed. The weights $w_i(\mu)$ of \eqref{eq:tangent_regressor}, collected into $w(\mu)\in\mathbb{R}^N$ and nonzero on the $k$ selected neighbors, give the coordinate vector $a(\mu)=V_p^\top w(\mu)\in\mathbb{R}^{r_p}$, and the predicted aligned core is
\begin{equation}
\label{eq:online_core}
\bar{\mathcal{G}}(\mu) = \bar{\mathcal{G}}_{\mathrm{mean}} + \mathcal{G}_{\mathrm{basis}}\bar{\times}_4 a(\mu),
\end{equation}
where $\bar{\times}_4$ contracts the fourth index of $\mathcal{G}_{\mathrm{basis}}$ against $a(\mu)$ and returns an $r_X\times r_Y\times r_T$ tensor.

\subsection{Compatible reconstruction}
\label{ssec:reconstruction}

The core holds coefficients relative to particular factor bases, and the predicted factors and the predicted core have to share one basis orientation before reconstruction. The alignment step fixes that orientation for the training data, and the Grassmann exponential map returns the predicted factors in the same reference orientation. The aligned cores behind \eqref{eq:core_mean} to \eqref{eq:online_core} are therefore written in the coordinates of the predicted factors, and no further alignment is needed at a query parameter. The supplementary appendix proves this compatibility and shows that the alignment step introduces no approximation error. With $\bar U_{\bullet}(\mu):=\hat U_{\bullet}(\mu)$ the field is
\begin{equation}
\label{eq:final_reconstruction}
\hat{\mathcal{X}}(\mu)
=
\bar{\mathcal{G}}(\mu)
\times_1 \bar U_X(\mu)
\times_2 \bar U_Y(\mu)
\times_3 \bar U_T(\mu).
\end{equation}
At fixed Tucker ranks, three sources of error remain in $\hat{\mathcal{X}}(\mu)$: the interpolation of the tangent coordinates in $\hat{\mathbf{Z}}(\mu)$, the interpolation of the core coordinates in $a(\mu)$ and the residual-core truncation at rank $r_p$. The Tucker truncation of the field itself is a separate representation error, measured beside the prediction error in the experiments.

\subsection{Computational complexity}
\label{ssec:complexity}

A cached query runs in two stages. The Grassmann stage forms the weighted sum of the neighbors' tangent coordinates and applies the exponential map. The reconstruction stage then assembles the full $X\times Y\times T$ field. The two methods differ mainly in the Grassmann stage. The single-subspace method works on one basis of ambient dimension $XY$. The product-Grassmann method works on three bases of ambient dimensions $X$, $Y$ and $T$. Reconstruction produces the same field in both methods, at a cost of the same order. Since the query time is the sum of the two stages, the speedup of the whole query lies between the speedup of the Grassmann stage and that of the reconstruction, and the shared reconstruction therefore bounds the end-to-end gain. Table~\ref{tab:complexity_side} lists the leading cost of each stage and the values stored per training sample. Its interpolation-size row counts only the values interpolated in the Grassmann stage, without the core or coefficient terms, which favors the single-subspace method.

\begin{table}[!htb]
\centering
\caption{Online cost per query and per-sample storage of the two surrogates.}
\label{tab:complexity_side}
\footnotesize
\setlength{\tabcolsep}{3pt}
\begin{tabular}{lcc}
\toprule
Stage & Product-Grassmann & Single-subspace \\
\midrule
Kernel weights & $\mathcal{O}(Nd)$ & $\mathcal{O}(Nd)$ \\
Tangent interpolation & $\mathcal{O}\bigl(k(Xr_X{+}Yr_Y{+}Tr_T)\bigr)$ & $\mathcal{O}(kXYr_s)$ \\
Exponential map & $\mathcal{O}(Xr_X^2{+}Yr_Y^2{+}Tr_T^2)$ & $\mathcal{O}(XYr_s^2)$ \\
Core / temporal coefficients & $\mathcal{O}(kr_p{+}r_Xr_Yr_Tr_p)$ & $\mathcal{O}(kr_sT)$ \\
Reconstruction & $\mathcal{O}(Xr_Xr_Yr_T{+}XYr_Yr_T{+}XYTr_T)$ & $\mathcal{O}(XYTr_s)$ \\
\midrule
Interpolation size (values) & $Xr_X+Yr_Y+Tr_T$ & $XYr_s$ \\
Stored per sample (values) & $Xr_X+Yr_Y+Tr_T+r_p$ & $XYr_s+r_sT$ \\
\bottomrule
\end{tabular}
\end{table}

The main saving lies in the interpolation size. At comparable ranks the single-subspace method interpolates $XYr_s$ values and the product-Grassmann method $Xr_X+Yr_Y+Tr_T$, a ratio of about $XY/(X+Y+T)$. The grid sets the scale of this ratio, and the ranks selected on training data determine its realized value. A third spatial axis increases it, since the single-subspace size grows with the product of the spatial sizes and the mode sum only with their total. The stored values scale the same way. The closed-form logarithm and exponential maps, the proofs and the step-by-step offline and online algorithms are in the supplementary appendix.

\section{Numerical Experiments}
\label{sec:experiments}

Five parameterized PDE benchmarks compare the proposed product-Grassmann Tucker surrogate, M1, with a controlled single-subspace Grassmannian surrogate, M2, on the same data and with the same local interpolation. The experiments address whether the smaller mode-wise representation preserves prediction accuracy and, where the two methods differ, how the difference relates to compression and to prediction. Two ablations, M1-Fixed and M1-Unaligned, isolate the roles of core variation and alignment. The supplementary appendix collects the numerical settings of every benchmark, the selection and resampling studies, and the robustness checks.

\subsection{Experimental protocol}
\label{ssec:experiment_protocol}

The methods compared are summarized in Table~\ref{tab:methods}. M2 is a single-subspace Grassmannian surrogate built on the same training data, local interpolation weights and Grassmannian prediction procedure as M1. The two methods therefore share the parameter-space interpolation and differ in the field representation and in the coefficients they predict. M2 represents the field by the spatial basis $U_{s,i}$ of the reshaped matrix \eqref{eq:flattened_snapshot}, one point on $\Gr(r_s,XY)$, and interpolates it exactly as M1 interpolates each mode-wise subspace, through the tangent space at a reference mean $[U_{s,0}]$ with the same kernel weights. Its coefficients follow the same reference convention as M1's cores. Each training basis is rotated onto the reference by Procrustes, giving $\bar U_{s,i}$; the temporal coefficients are taken in that aligned frame, $B_i=\bar U_{s,i}^\top M_i\in\mathbb{R}^{r_s\times T}$, the full reduced time history with one column per stored step; and the reconstruction is $\hat U_s(\mu)\hat B(\mu)$. M2 keeps the full time resolution, whereas M1 compresses time to rank $r_T$, and the supplementary appendix quantifies this temporal asymmetry on the wave and heat cases. The spatial basis of M2 comes from a full singular value decomposition, and the retained rank is the one the selection rule chooses, up to the matrix-rank ceiling $\min(XY,T)$.

The two ablations change one element of M1 each. M1-Fixed keeps one core for all parameters, the aligned training mean, as the single decomposition of \citet{sasagawa2023tucker} keeps one core. The factor subspaces of M1-Fixed are predicted with the same step and weights as M1, whereas the method of Sasagawa and Tanaka carries the parameter dependence in a spline-interpolated parameter-mode factor.

\begin{table}[!htb]
\centering
\caption{Methods compared. M1-Fixed and M1-Unaligned are ablations of M1.}
\label{tab:methods}
\footnotesize
\setlength{\tabcolsep}{3pt}
\begin{tabular}{@{}l>{\raggedright\arraybackslash}p{0.30\linewidth}>{\raggedright\arraybackslash}p{0.27\linewidth}>{\raggedright\arraybackslash}p{0.22\linewidth}@{}}
\toprule
Method & Subspace representation & Core treatment & Role \\
\midrule
M1 & mode-wise product Grassmann & low-rank aligned-core model & proposed surrogate \\
M2 & single Grassmann subspace after reshaping the spatial dimensions & parameter-dependent reduced coefficients & representation comparison \\
\midrule
M1-Fixed & mode-wise product Grassmann & fixed mean aligned core & core-variation ablation \\
M1-Unaligned & mode-wise product Grassmann & low-rank core model without alignment & alignment ablation \\
\bottomrule
\end{tabular}
\end{table}

Each method is tuned on its own training data. The Tucker ranks of M1, the single-subspace rank $r_s$ of M2, the neighborhood size $k$ and the core rank $r_p$ are selected by validation on the training data~\citep{arlot2010survey}. Each candidate is scored by its mean relative error over held-out training samples, with leave-one-out throughout except for the larger single-subspace search of Case~V, which uses ten training folds. On the rank ladders, candidates whose score lies within a tenth of a percentage point of the best count as tied, and the tied candidate with the smallest interpolation size is taken. The parameters $k$ and $r_p$ do not set the interpolation size and take the best score, with ties within a hundredth of a point resolved to the smaller value. The tenth-of-a-point tie band was adopted after an exploratory comparison of two candidate bands on the original splits. For that reason the supplementary appendix also reports a nested resampling study that re-selects every model choice inside each training split under both bands. The candidate ladders, fixed in advance, and the remaining details of the selection are in the supplementary appendix and the repository.

Prediction accuracy is measured on held-out test parameters by the relative Frobenius error,
\begin{equation}
\label{eq:rel_error}
\epsilon_{\mathrm{rel}}(\mu)
=
\frac{\|\mathcal{X}_{\mathrm{true}}(\mu)-\hat{\mathcal{X}}(\mu)\|_F}
{\|\mathcal{X}_{\mathrm{true}}(\mu)\|_F},
\end{equation}
on the original split of each benchmark and on ten resampled splits, and subspace diagnostics use the mode-wise principal angles. The field figures of the case studies show the mid-time frame of the worst test sample, the held-out sample with the largest M1 error. Every reported error is computed over the complete spatiotemporal tensor. M1 aims to retain the accuracy of M2 at lower cost. The accuracy comparison therefore refits both methods on each resampled split, at their selected ranks and hyperparameters, and reports the paired M1 minus M2 difference in mean error with a two-sided $95\%$ interval. The difference counts as within tolerance when the upper limit of that interval stays below a prespecified tolerance of one percentage point. The supplementary appendix repeats the comparison against a relative margin of ten percent of the M2 error and, since the resampled test sets overlap, with a cluster bootstrap over simulations.

Cost is measured in two ways. The interpolation size is the number of values the Grassmann stage interpolates, $Xr_X+Yr_Y+Tr_T$ for M1 and its ablations against $XYr_s$ for M2. The query time is the wall-clock time of a fully cached query, each method summing only the $k$ active neighbors, and it includes the full-field reconstruction present in both methods. The kernel weights, the same for both methods, are formed before the timed region. Both surrogates already replace the solver at query time, and every speedup reported below is the further saving of M1 over the single-subspace surrogate.

Each case study also follows both methods across a common rank budget $r$, the sum of the Tucker ranks for M1 and the rank $r_s$ for M2. The budget indexes model size without equating storage or cost. At every budget from $4$ to that of the reported ranks, and at least to $15$, the search fits each combination of Tucker ranks with that sum from full mode-wise SVDs. It keeps the one with the lowest ten-fold cross-validation error on the training samples of the original split, and the test samples take no part in this choice. A budget with more than $150$ combinations is searched within three of the previous budget's choice in every mode. The kept combination and M2 at $r_s=r$ are then refitted on the ten resampled splits with $k$ and $r_p$ at their selected values. On Case~IV the single-subspace rank cannot exceed the matrix-rank ceiling of the benchmark, and the M2 curve ends there. The case-study figures plot the mean test error over those splits, the interpolation size and the cached query time against $r$, with both methods timed together in one session. The reported ranks remain those of the leave-one-out rule.

A method's error combines the information lost when the field is compressed and the error added when the reduced representation is predicted at a new parameter. To separate them, the reconstruction error is bounded by
\begin{equation}
\label{eq:error_decomp}
\|\mathcal{X}(\mu)-\hat{\mathcal{X}}(\mu)\|_F
\le
\|\mathcal{X}(\mu)-\mathcal{X}_{\mathrm{repr}}(\mu)\|_F
+
\|\mathcal{X}_{\mathrm{repr}}(\mu)-\hat{\mathcal{X}}(\mu)\|_F,
\end{equation}
where $\mathcal{X}_{\mathrm{repr}}(\mu)$ is the reconstruction of the true field in the method's own representation at the selected ranks, computed with the decomposition procedure used for that benchmark. The first term is the representation error, which remains even when the surrogate recovers the chosen factors and core exactly at $\mu$. The second is the prediction error, the error introduced by predicting the subspaces and the core at $\mu$ in place of recomputing them, and it absorbs the rank-$r_p$ core-model truncation. Both terms are evaluated on the held-out samples. For the matricized field of M2 the representation error at rank $r_s$ is the exact singular value tail. For the Tucker format it is the projection error at the selected ranks, and the supplementary appendix summarizes the standard HOSVD bounds that relate it to the best Tucker approximation~\citep{DeLathauwer2000a,Vannieuwenhoven2012,Hackbusch2012}.

\subsection{Benchmark problems}
\label{ssec:benchmarks}

The numerical study uses five parameterized PDE benchmarks that differ in how compressible the field is along its modes, how far its subspaces move with the parameter and how many spatial modes it carries. Four are constructed for this study, and Case~IV draws on the public CFDBench collection. Table~\ref{tab:benchmarks} summarizes the governing problem, the grid, the parameter dimension $d$, the original split and the role of each benchmark.

\begin{table}[!htb]
\centering
\caption{The five benchmarks: grid (spatial resolution $\times$ stored times), parameter dimension $d$, original training/test split and purpose.}
\label{tab:benchmarks}
{\footnotesize\setlength{\tabcolsep}{3pt}%
\begin{tabular}{@{}c >{\raggedright\arraybackslash}p{0.29\linewidth} ccc >{\raggedright\arraybackslash}p{0.31\linewidth}@{}}
\toprule
Case & System & Grid & $d$ & Original split & Purpose \\
\midrule
I & Anisotropic diffusion-reaction & $256^2{\times}100$ & 3 & 30/10 & smooth field, easy compression \\
II & Double-slit wave interference & $81^2{\times}500$ & 2 & 30/6 & moving interference, hard prediction \\
III & Multimodal forced heat & $128^2{\times}101$ & 2 & 30/10 & variation concentrated in the temporal mode \\
IV & Tube flow (CFDBench), external & $64^2{\times}15$ & 2 & 78/20 & public data, single-subspace representation exact \\
V & Advection-diffusion, two-scale & $32^3{\times}24$ & 3 & 40/10 & third spatial mode \\
\bottomrule
\end{tabular}}%
\end{table}

\subsection{Case Study I: anisotropic diffusion-reaction}
\label{sec:case_anisotropic}

The first benchmark is a smooth field that compresses easily. Little information is lost at the selected ranks, and the subspaces vary gently with the parameter. Nearly all of the error therefore comes from predicting the reduced representation at a new parameter, and the case compares the two methods almost entirely on prediction. The benchmark is a two-dimensional anisotropic diffusion-reaction equation on the unit square,
\begin{equation}
\frac{\partial u}{\partial t}=D_x\frac{\partial^2u}{\partial x^2}+D_y\frac{\partial^2u}{\partial y^2}-\lambda u+g(x,y)q(t),
\qquad (x,y)\in[0,1]^2,\ t\in(0,1],
\label{eq:case1_pde}
\end{equation}
with zero initial condition, periodic boundaries and a separable Gaussian source. The parameter vector $\mu=(D_x,D_y,\lambda)$ sets the diffusion in the two spatial directions and the decay rate.

The $40$ samples are a Latin hypercube over $[0.003,0.008]\times[0.002,0.005]\times[0.40,0.90]$; $30$ are used for training and $10$ interior points, each at least $0.15$ of every parameter range from the boundary of the box, are held out for testing. The fields are computed by a Fourier spectral method on a $256\times256$ grid and stored at $100$ times over $t\in[0,1]$.

At the selected ranks the two methods reach the same mean test error. Holding the core at its training mean more than triples that error. Even on a field that compresses easily, the core varies with the parameter.

Both methods recover the worst test sample closely, with residuals of the same size and shape (Figure~\ref{fig:case1_fields_v6}).

\begin{figure}[!htb]
  \centering
  \includegraphics[width=\fieldwidth]{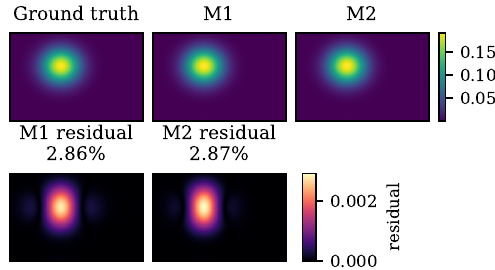}
  \caption{Case Study~I, worst test sample, mid-time frame: truth, M1 and M2 on one color scale, and the two residuals on another.}
  \label{fig:case1_fields_v6}
\end{figure}

M1 attains the best mean error of M2, $1.18$ percent, to within $0.01$ points already at a budget of $r=6$, with ranks $(2,2,2)$ and $107$ times fewer interpolated values (Figure~\ref{fig:budget_I}). On this easily compressed field both error curves saturate at small budgets, and M2 lies within $0.01$ points of its best mean from $r_s=2$ onward.

\begin{figure}[!htb]
  \centering
  \includegraphics[width=\linewidth]{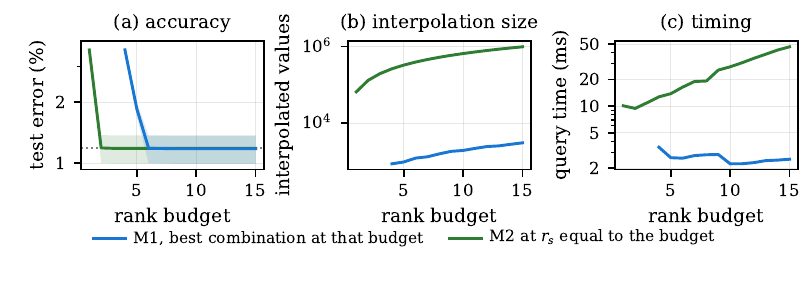}
  \caption{Case Study~I against the rank budget $r$: (a) mean test error over ten resampled splits with one standard deviation shaded and the best mean error of M2 dotted, (b) interpolation size and (c) cached query time.}
  \label{fig:budget_I}
\end{figure}

\subsection{Case Study II: double-slit wave interference}
\label{sec:case_doubleslit}

Moving interference patterns make the wave benchmark hard to predict. A small change in frequency shifts the pattern across the domain, and a reconstruction that looks reasonable can still carry a large error. The model is wave propagation on the unit square,
\begin{equation}
\frac{\partial^2u}{\partial t^2}=\Delta u,\qquad (x,y)\in[0,1]^2,\ t\in(0,1],
\label{eq:case2_pde}
\end{equation}
at unit wave speed, driven through two slits on the left boundary by a sinusoid with amplitude $\mu_1$ and frequency parameter $\mu_2$, which form $\mu=(\mu_1,\mu_2)$.

The $36$ runs form the full $6\times6$ grid of $\mu_1\in\{60,72,84,96,108,120\}$ against $\mu_2\in\{3,3.5,4,4.5,5,5.5\}$; $30$ are used for training and $6$ interior grid points are held out for testing. Each run is stored on an $81\times81$ grid at $500$ time steps. The archive does not record the integration scheme, the slit positions or the treatment of the other boundaries.

On this field the two methods are nearly identical in accuracy, and a fixed core gives three times their error. Both methods leave the same narrow residual bands (Figure~\ref{fig:case2_fields_v6}), consistent with small errors in the predicted phase that displace the interference bands.

\begin{figure}[!htb]
  \centering
  \includegraphics[width=\fieldwidth]{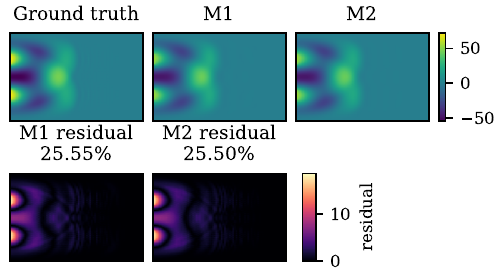}
  \caption{Case Study~II, worst test sample, mid-time frame: truth, M1 and M2 on one color scale, and the two residuals on another.}
  \label{fig:case2_fields_v6}
\end{figure}

On the wave field M1 remains $0.09$ points above the best mean error of M2, $22.28$ percent, at the largest budget considered, $r=43$, the budget of its selected ranks, and interpolates $19$ times fewer values than M2 at $r_s=32$ (Figure~\ref{fig:budget_II}). Its error is still decreasing at that budget.

\begin{figure}[!htb]
  \centering
  \includegraphics[width=\linewidth]{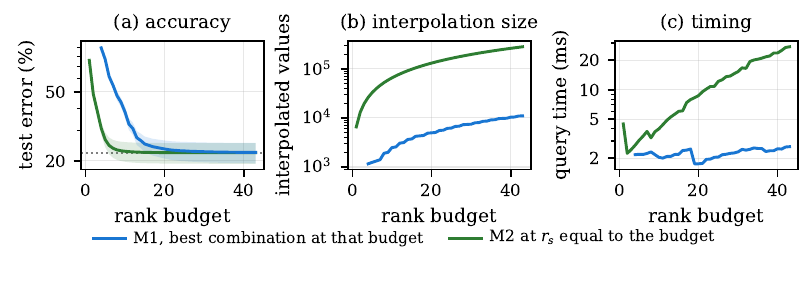}
  \caption{Case Study~II against the rank budget $r$, panels as in Figure~\ref{fig:budget_I}.}
  \label{fig:budget_II}
\end{figure}

\subsection{Case Study III: multimodal forced heat equation}
\label{sec:case_heat}

In the heat benchmark the parametric variation is concentrated in the temporal mode. The model is a forced diffusion equation on the unit square with homogeneous Dirichlet boundary conditions,
\begin{equation}
\frac{\partial u}{\partial t}=D(\mu_2)\Delta u+S(x,y,t;\mu),\qquad D(\mu_2)=D_0(1+0.5\mu_2),\ D_0=0.003,
\label{eq:case3_pde}
\end{equation}
with parameter vector $\mu=(\mu_1,\mu_2)\in[0.5,1.5]^2$. The initial condition mixes eight sine modes, and the forcing is a sum of nine Gaussian-in-time pulses on sine modes, all with spatial index at most three. The amplitudes of the modes and pulses depend on the parameter, and the diffusivity on $\mu_2$. The spatial sine content is fixed, and the parameter changes the temporal behavior.

The $40$ samples are independent uniform draws over the square under a fixed seed; the first $30$ are used for training and the last $10$ are held out for testing. Each run is computed on a $128\times128$ grid and stored at $101$ times over $t\in[0,0.4]$.

The selected ranks follow the structure of the field. Diffusion damps each Fourier mode without mixing it. The spatial subspaces are then the same across the training set, and the temporal subspaces turn with the parameter. Both spatial ranks of M1 are three, the exact rank of the spatial content, and only the temporal rank rises, to four. The M2 rank $r_s=9$ is the product of the two spatial ranks, and the two methods therefore carry the same spatial content.

The two methods are close in accuracy, with almost equal prediction errors. The small difference comes from the temporal compression of M1. Freezing the core increases the M1 error more than fivefold.

The worst test sample leaves a smooth, domain-wide residual (Figure~\ref{fig:case3_fields_v6}), consistent with an error in the predicted temporal behavior.

\begin{figure}[!htb]
  \centering
  \includegraphics[width=\fieldwidth]{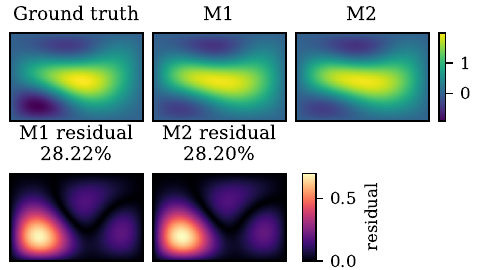}
  \caption{Case Study~III, worst test sample, mid-time frame: truth, M1 and M2 on one color scale, and the two residuals on another.}
  \label{fig:case3_fields_v6}
\end{figure}

M1 attains the best mean error of M2, $7.82$ percent, to within $0.01$ points at $r=13$ with $100$ times fewer interpolated values than M2 at $r_s=9$ (Figure~\ref{fig:budget_III}). From $r=7$ onward the cross-validated combination holds both spatial ranks at three and assigns the remaining budget to the temporal mode. At $r=10$, the budget of the selected ranks, it coincides with the selection $(3,3,4)$.

\begin{figure}[!htb]
  \centering
  \includegraphics[width=\linewidth]{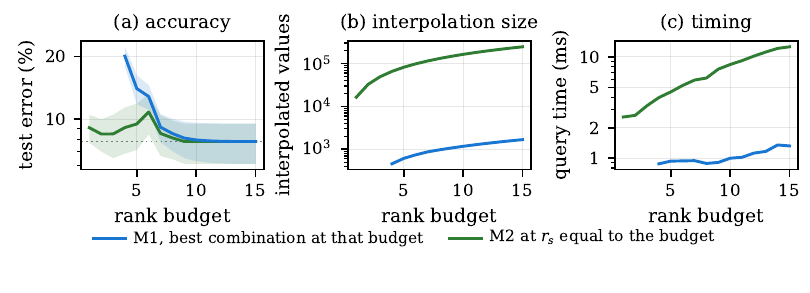}
  \caption{Case Study~III against the rank budget $r$, panels as in Figure~\ref{fig:budget_I}.}
  \label{fig:budget_III}
\end{figure}

\subsection{Case Study IV: tube flow (CFDBench)}
\label{sec:case_cfdbench}

Case~IV applies the method to external data, the \texttt{tube} problem of CFDBench~\citep{luo2023cfdbench}, a public collection of parametric computational fluid dynamics runs~\citep{cfdbench2023data}, in a setting where the single-subspace representation is exact and the error of M2 is prediction error alone. The benchmark is laminar tube flow, governed by the incompressible Navier--Stokes equations,
\begin{equation}
\rho\left(\frac{\partial\vect{u}}{\partial t}+\vect{u}\cdot\nabla\vect{u}\right)=-\nabla p+\eta\,\nabla^2\vect{u},
\qquad \nabla\cdot\vect{u}=0,
\label{eq:case4_pde}
\end{equation}
integrated by a finite-volume solver and interpolated by the collection's authors onto a $64\times64$ grid. The subset used varies the fluid properties at fixed geometry and inlet velocity, and the parameter is the pair $\mu=(\mu_1,\mu_2)$, the density index of the collection, used as an ordinal coordinate, and the dynamic viscosity $\eta$.

The subset provides a $10\times10$ parameter grid; two runs with fewer than $15$ stored frames are removed, leaving $98$, of which $78$ are used for training and $20$, drawn at random under a fixed seed, are held out for testing, $8$ of them on the boundary of the grid. Each run stores the velocity magnitude on a $64\times64$ grid over $15$ time steps, and the full provenance is in the supplementary appendix.

M1 truncates its two spatial modes at rank $20$ each and keeps the full temporal rank. Each run stores only $15$ time steps, and the matricized field therefore has rank at most $\min(XY,T)=15$, which M2 retains.

M2 is slightly more accurate here. The prediction-error terms of the two methods almost coincide, and the gap is consistent with the Tucker representation error of M1. With a fixed core the error is almost five times that of M1.

The M1 residual of the worst test sample is spread across the interior of the field (Figure~\ref{fig:case4_fields}).

\begin{figure}[!htb]
  \centering
  \includegraphics[width=\fieldwidth]{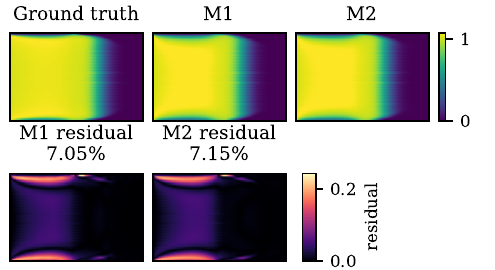}
  \caption{Case Study~IV, worst test sample, mid-time frame: truth, M1 and M2 on one color scale, and the two residuals on another.}
  \label{fig:case4_fields}
\end{figure}

On this external benchmark M1 attains the error of M2 to within $0.01$ points at $r=67$ with $17$ times fewer interpolated values, and a larger core rank takes it below M2 (Figure~\ref{fig:budget_IV}). M2 improves with the budget up to its ceiling $r_s=15$, where its representation is exact, and ends at $1.76$ percent. At $r=55$, the budget of the selected ranks, the cross-validated combination $(13,27,15)$ attains $1.79$ percent, below the $1.88$ percent of the selected $(20,20,15)$, whose ladder constrains the two spatial ranks to be equal. The sweep ends at $r=80$, where over the last fifteen budgets each unit changes the error by less than $0.002$ points. Raising $r_p$ there from the selected $10$ to $16$ lowers the M1 error from $1.76$ to $1.72$ percent, below M2.

\begin{figure}[!htb]
  \centering
  \includegraphics[width=\linewidth]{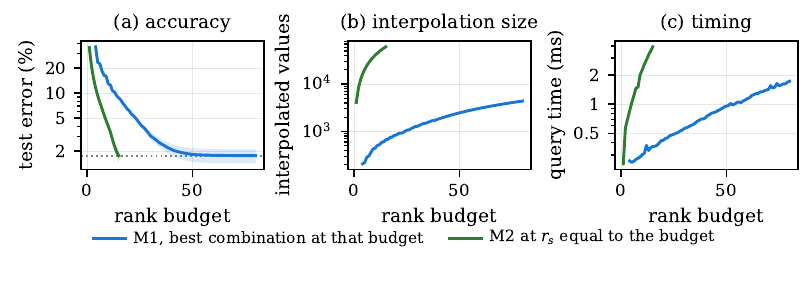}
  \caption{Case Study~IV against the rank budget $r$, panels as in Figure~\ref{fig:budget_I}.}
  \label{fig:budget_IV}
\end{figure}

\subsection{Case Study V: three-dimensional advection-diffusion}
\label{sec:case_3d}

The fifth benchmark adds a third spatial mode $Z$, on a field with two spatial scales. The single-subspace object then grows with the product $XYZ$ of the three spatial sizes and the Tucker object with their sum. The cost model predicts its largest saving here. The scalar field solves advection-diffusion on the periodic cube $\mathbb{T}^3=[0,1)^3$,
\begin{equation}
\partial_t u + c\,\partial_x u = \nu\,\nabla^2 u,
\qquad c=0.48,\quad \nu=2.5\cdot10^{-3},\quad t\in[0,1.25],
\label{eq:case3d_pde}
\end{equation}
with $c$ the advection speed along $x$ and $\nu$ the diffusivity. Because the Fourier propagator evaluates the solution directly at the stored times, no time-stepping error enters the generated fields. The parameter $\mu=(\mu_y,\mu_z,\alpha)$ enters through the initial condition, which superposes a narrow Gaussian pulse of width $0.15$ centered at $(0.14,\mu_y,\mu_z)$ and $\alpha$ times a wide Gaussian of width $0.28$ at a fixed center, both in the periodic distance. The first two parameters move the narrow pulse across the transverse plane and so turn the mode-$y$ and mode-$z$ subspaces. The third sets the balance of the two scales, which decay at different rates, and thereby changes the temporal signature and with it the core.

The $50$ samples are a Latin hypercube over $[0.40,0.60]^2\times[0.15,1.00]$ under a fixed seed; $40$ are used for training and $10$ interior points, at least a tenth of each range from the edges, are held out for testing. Each run is stored on a $32\times32\times32$ grid over $24$ time steps.

The third spatial mode gives the largest interpolation reduction and query speedup of the five benchmarks, with M1 close to M2 in accuracy. The resampled gap, $+0.22$ points with a paired interval of $[+0.16,+0.28]$, is the largest of the five and inside the prespecified tolerance. The fixed core raises the error by a factor of $2.8$ and the unaligned variant, at $17.37$ percent, by a factor of $2.5$. The aligned core therefore varies with the parameter, and the Procrustes step makes the cores comparable.

Both methods place the narrow pulse and the broad background of the worst test sample (Figure~\ref{fig:case5_fields}). The residual comes from a visibly displaced pulse center.

\begin{figure}[!htb]
  \centering
  \includegraphics[width=\fieldwidth]{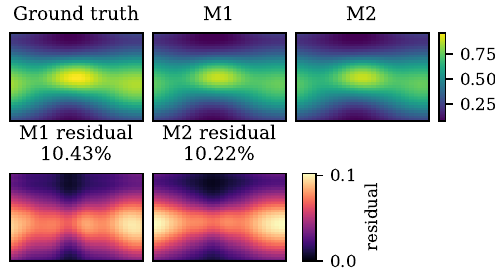}
  \caption{Case Study~V, worst test sample, mid-time frame on the $z$ slice through the pulse: truth, M1 and M2 on one color scale, and the two residuals on another.}
  \label{fig:case5_fields}
\end{figure}

At the largest budget considered, $r=15$, M1 remains $0.21$ points above the best mean error of M2 with $447$ times fewer interpolated values, and its error is still decreasing (Figure~\ref{fig:budget_V}). With three spatial modes, M2 lies within $0.01$ points of its best mean, $6.81$ percent, from $r_s=6$ onward. At $r=14$, the budget of the selected ranks, the cross-validated combination coincides with the selection $(5,2,2,5)$.

\begin{figure}[!htb]
  \centering
  \includegraphics[width=\linewidth]{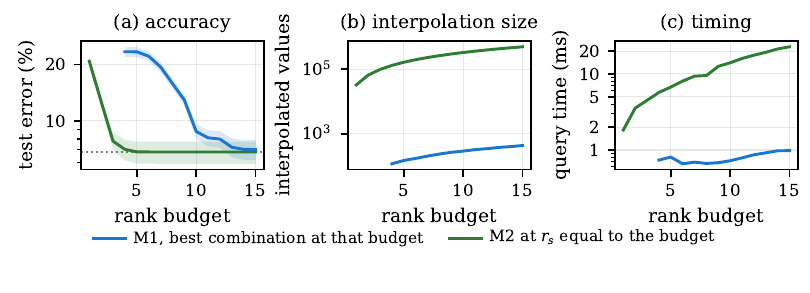}
  \caption{Case Study~V against the rank budget $r$, panels as in Figure~\ref{fig:budget_I}.}
  \label{fig:budget_V}
\end{figure}

On finer grids the interpolation advantage grows strongly. The wall-clock speedup grows more slowly because the full-field reconstruction takes an increasing share of the M1 query (Table~\ref{tab:case3d_resolution}). The same problem is repeated at $48^3$ and $64^3$ with the training design and the selected ranks unchanged, and the accuracy difference does not change.
\begin{table}[!htb]
\centering
\caption{Case Study~V at three grid resolutions, with the gap as mean $\pm$ standard deviation over ten resampled splits.}
\label{tab:case3d_resolution}
\footnotesize
\setlength{\tabcolsep}{3pt}
\begin{tabular}{rrrrrrrrr}
\toprule
$n$ & Points & M1 values & M2 values & Ratio & M1 (ms) & M2 (ms) & Speedup & Gap (pp) \\
\midrule
32 & 32{,}768 & 408 & 196{,}608 & $481.9\times$ & 1.0 & 8.2 & $8.5\times$ & $+0.22\pm0.09$ \\
48 & 110{,}592 & 552 & 663{,}552 & $1202.1\times$ & 2.2 & 24.8 & $11.1\times$ & $+0.22\pm0.09$ \\
64 & 262{,}144 & 696 & 1{,}572{,}864 & $2259.9\times$ & 4.9 & 57.9 & $11.9\times$ & $+0.22\pm0.09$ \\
\bottomrule
\end{tabular}
\end{table}

\subsection{Accuracy and cost across the benchmarks}
\label{ssec:cross_case_findings}

Across the five benchmarks M1 matches the accuracy of the single-subspace surrogate to within the prespecified one-percentage-point tolerance while reducing the size of the interpolation problem by factors of $13$ to $482$ (Table~\ref{tab:results_summary}). On Cases~II to~V the intervals exclude zero, and M2 is slightly more accurate there, by $0.07$ to $0.22$ points. Every upper limit is also below the relative margin of ten percent of the M2 error. Both ablations of M1 are worse than M1 on every case, the fixed core by factors of $2.8$ to $5.4$ and the unaligned core by factors of $2.3$ to $10.6$, with the per-case gaps and errors in the supplementary appendix. The largest reductions in interpolation size occur where the field admits low mode-wise ranks relative to the vectorized spatial dimension, most of all on the three-dimensional benchmark. The smallest is on the wave problem, whose $500$ stored time steps make the temporal mode a large part of the Tucker representation.

\begin{table}[!htb]
\centering
\caption{Accuracy and interpolation size at the selected ranks on the original test split, with errors in percent and the interpolation ratio M2 over M1.}
\label{tab:results_summary}
\footnotesize
\setlength{\tabcolsep}{4pt}
\begin{tabular}{llrrrr}
\toprule
Case & M1 ranks & $r_s$ & M1 & M2 & Interp.\ ratio \\
\midrule
I & $(2,4,4)$ & $5$ & 1.01 & 1.01 & $169.3\times$ \\
II & $(11,16,16)$ & $20$ & 21.38 & 21.34 & $12.9\times$ \\
III & $(3,3,4)$ & $9$ & 6.57 & 6.43 & $125.8\times$ \\
IV & $(20,20,15)$ & $15$ & 1.40 & 1.24 & $22.1\times$ \\
V & $(5,2,2,5)$ & $6$ & 6.90 & 6.69 & $481.9\times$ \\
\bottomrule
\end{tabular}
\end{table}

The cached query time falls by a smaller factor than the interpolation size, as the cost model of Section~\ref{ssec:complexity} predicts. Table~\ref{tab:cost_stages} compares the two stages with that prediction. Its stored-model column counts the reference basis once, the tangent coordinates and coefficients of every training sample and, for M1, the core model. The Grassmann stage is faster for M1 on every benchmark. The reconstruction stage is slower for M1 except on Case~III, with ratios of $0.35$ to $0.65$. Full-field reconstruction remains necessary for both methods and limits the end-to-end gain. The measured Grassmann-stage ratios run from $3.9$ to $28.1$, while the complete cached-query speedups run from $3.3$ to $8.5$.

\begin{table}[!htb]
\centering
\caption{Online cost ratios, M2 over M1, of the Grassmann stage, the reconstruction, the complete cached query and the stored model.}
\label{tab:cost_stages}
\footnotesize
\setlength{\tabcolsep}{6pt}
\begin{tabular}{lrrrr}
\toprule
& \multicolumn{3}{c}{Cached time} & \\
\cmidrule(lr){2-4}
Case & Grassmann & Recon. & Query & Stored \\
\midrule
I & 28.1 & 0.65 & $5.3\times$ & $169\times$ \\
II & 7.5 & 0.45 & $3.5\times$ & $13\times$ \\
III & 18.6 & 1.51 & $7.3\times$ & $125\times$ \\
IV & 3.9 & 0.41 & $3.3\times$ & $17\times$ \\
V & 19.9 & 0.35 & $8.5\times$ & $464\times$ \\
\bottomrule
\end{tabular}
\end{table}

Reconstruction limits the speedup on Case~I, and the Grassmann stage drives it on Case~V. On Case~I the Grassmann stage is much cheaper for M1, and reconstruction then accounts for most of its query cost. On Case~V the larger vectorized spatial dimension makes the Grassmann stage dominant for M2, and the speedup is the largest of the five. Offline training takes longer for M1 on Cases~II, III and~V, and the cached-query saving recovers that difference after about $1{,}500$ queries on Case~III and fewer on the other two, with M1 ahead from the first query on Cases~I and~IV.

The nested study of the supplementary appendix re-selects every model choice inside each resampled split. Under both tie bands every upper limit remains below the tolerance, although the mean gap on Case~II rises to $0.36$ points under the reported band. The boundary and noise checks lead to the same conclusion. The gap grows with the parameter dimension on Case~III, to $0.98$ points at six parameters, and widening the Case~V design box from $0.20$ to $0.40$ raises it to $5.30$ points.

Across the rank budget, M1 attains the best mean error of M2 to within $0.01$ points on Cases~I, III and~IV, and it remains $0.09$ and $0.21$ points above it on Cases~II and~V at the largest budgets considered. At the matching points of Cases~I, III and~IV, M1 interpolates $17$ to $107$ times fewer values and answers a cached query $2.9$ to $6.5$ times faster, in line with the cost model of Section~\ref{ssec:complexity}.

\subsection{Error attribution}
\label{ssec:error_attribution}

Table~\ref{tab:floors_v6} reports the representation and prediction terms of the error bound \eqref{eq:error_decomp} on held-out samples. Since the bound is a triangle inequality, its two terms do not split the error into additive parts. The prediction terms of the two methods differ by less than $0.1$ percentage points on every benchmark, and the remaining M1 minus M2 differences are consistent with the larger representation term of M1. Both methods use the same parameter neighborhoods, the same kernel weights and the same local Grassmann interpolation, and they differ in the field representation and in the coefficients they predict.

\begin{table}[!htb]
\centering
\caption{Representation and prediction terms of the bound \eqref{eq:error_decomp}, in percent, as means over ten resampled splits, with the original-split errors in Table~\ref{tab:results_summary} and values below $0.005$ shown as $0.00$.}
\label{tab:floors_v6}
\small
\begin{tabular}{lrrrrrr}
\toprule & \multicolumn{2}{c}{Representation} & \multicolumn{2}{c}{Prediction} & \multicolumn{2}{c}{Test error} \\
\cmidrule(lr){2-3}\cmidrule(lr){4-5}\cmidrule(lr){6-7}
Case & M1 & M2 & M1 & M2 & M1 & M2 \\
\midrule
I & 0.09 & 0.00 & 1.18 & 1.18 & 1.18 & 1.18 \\
II & 2.14 & 1.54 & 22.27 & 22.26 & 22.40 & 22.33 \\
III & 0.89 & 0.00 & 7.82 & 7.82 & 7.95 & 7.82 \\
IV & 0.48 & 0.00 & 1.75 & 1.76 & 1.88 & 1.76 \\
V & 1.82 & 0.24 & 6.72 & 6.81 & 7.04 & 6.82 \\
\bottomrule
\end{tabular}
\end{table}

Relative to the error level, the M1 minus M2 differences are largest where the representation term is a large share of the total. On Cases~I to~III the representation term is at most about an eighth of the total error, and the two methods are very close despite their different compression. On Cases~IV and~V the M1 representation term is about a quarter as large as the total error, and those are the two cases with the largest M1 minus M2 differences relative to the error level.

\section{Conclusion}
\label{sec:conclusion}

This work develops a tensor-based surrogate for repeated full-field prediction in parametric PDE problems, with the spatial directions and time kept as separate Tucker modes. The factor subspaces are predicted on a product Grassmann manifold, and Procrustes alignment places the Tucker cores in a common coordinate system before their parameter dependence is modeled by a low-rank aligned-core model. The predicted factors and core therefore combine directly in the reconstruction.

Across five PDE benchmarks the product-Grassmann surrogate is close in accuracy to a single-subspace Grassmannian surrogate with the same local interpolation, while reducing the interpolation size and the stored model by more than an order of magnitude and the cached query time by a factor of at least three. On held-out samples the two methods have almost the same prediction term under this common interpolation, and the final differences are larger where the Tucker representation error is a noticeable share of the total. The three-dimensional grid study shows the computational advantage growing with the spatial resolution. The full-field reconstruction limits the realized end-to-end speedup.

The current implementation uses fixed Tucker ranks across the parameter domain, tensor modes tied to the grid axes and one reference tangent space for the Grassmann interpolation, which requires the training and query subspaces to remain close to that reference. The main numerical study covers scalar fields on regular grids with two or three parameters. A supplementary study on Case~III raises the parameter dimension to six, where both methods lose accuracy and the M1 minus M2 upper confidence limit exceeds the one-point tolerance. Widening the transverse design box of Case~V to twice its reported width opens the gap to $5.30$ points once training subspaces at opposite ends of the box turn through most of a right angle relative to the reference.

A natural extension is to use several local reference spaces when the subspaces vary too far for a single reference. A second direction is validation on larger three-dimensional finite-volume or finite-element simulations and on measured spatiotemporal fields.

\if1\blind
{
\section*{Declaration of competing interest}
The authors report no competing interests.
} \fi

\if0\blind
{
\section*{CRediT authorship contribution statement}
\textbf{Melika Baghi:} Conceptualization, Methodology, Software, Validation, Investigation, Writing -- original draft. \textbf{Xiao Liu:} Conceptualization, Supervision, Writing -- review \& editing. \textbf{Kamran Paynabar:} Methodology, Supervision, Writing -- review \& editing.

\section*{Declaration of competing interest}
The authors declare no competing interests.

\section*{Declaration of generative AI and AI-assisted technologies in the writing process}
During the preparation of this work the authors used ChatGPT (OpenAI) and Claude (Anthropic) to improve the language and organization of the manuscript. After using these tools, the authors reviewed and edited the content as needed and take full responsibility for the content of the publication.
} \fi

\section*{Funding}
This research did not receive any specific grant from funding agencies in the public, commercial, or not-for-profit sectors.
\section*{Data and code availability}
During review, the implementation, the experiment scripts, the stored run outputs and the build file, whose three verification scripts recompute every sourced number of both documents from those outputs, are available from the corresponding author on request. Upon acceptance the repository will be made public under a persistent identifier, with the generator scripts that regenerate Cases~I, III and~V under their fixed seeds, the stored fields of Case~II, the case identifiers and preprocessing script of the external data, the software environment with its fixed random seeds, and the files behind every reported table and figure.

\clearpage
\appendix
\setcounter{figure}{0}\setcounter{table}{0}
\makeatletter
\@addtoreset{figure}{section}\@addtoreset{table}{section}
\@addtoreset{theorem}{section}\@addtoreset{proposition}{section}\@addtoreset{lemma}{section}
\@addtoreset{corollary}{section}\@addtoreset{assumption}{section}\@addtoreset{remark}{section}
\makeatother
\renewcommand{\thetheorem}{\Alph{section}.\arabic{theorem}}
\renewcommand{\theproposition}{\Alph{section}.\arabic{proposition}}
\renewcommand{\thelemma}{\Alph{section}.\arabic{lemma}}
\renewcommand{\thecorollary}{\Alph{section}.\arabic{corollary}}
\renewcommand{\theassumption}{\Alph{section}.\arabic{assumption}}
\renewcommand{\theremark}{\Alph{section}.\arabic{remark}}

\section[Technical Details for the Product-Grassmann Tucker Surrogate]{Technical Details for the\\ Product-Grassmann Tucker Surrogate}
\label{app:technical_details}

This appendix provides the bounds, proofs and algorithms behind the Methodology section of the main paper. One of the proofs establishes the coordinate compatibility of the predicted factors and the predicted aligned core, which lets them combine without online realignment. Everything is written for the third-order case.

\subsection{Representation error}
\label{app:repr_error}

For the matricized snapshot matrix $M_i$ the best rank-$r_s$ spatial subspace is spanned by its leading left singular vectors, and the squared residual is the discarded singular energy $\sum_{j>r_s}\sigma_j(M_i)^2$~\citep{Eckart1936}. For the tensor, the best Tucker approximation at ranks $(r_X,r_Y,r_T)$, with squared Frobenius error $E_{\mathrm{Tucker}}(\mu)$, depends on the factors only through the subspaces $[U_\bullet]$. The truncated higher-order SVD~\citep{DeLathauwer2000a,Kolda2009} satisfies the mode-wise bound
\begin{equation}
\label{eq:hosvd_bound}
\bigl\|\mathcal{X}(\mu)-\hat{\mathcal{X}}_{\mathrm{HOSVD}}(\mu)\bigr\|_F^2
\le
\sum_{\bullet\in\{X,Y,T\}}\;\sum_{j>r_\bullet}
\bigl(\sigma_j^{(\bullet)}(\mu)\bigr)^2 ,
\end{equation}
with $\sigma_j^{(\bullet)}(\mu)$ the $j$th singular value of the mode-$\bullet$ matricization. The truncated HOSVD is quasi-optimal, with a Frobenius error within the factor $\sqrt{3}$ of that of the best Tucker approximation at the same ranks \citep{Hackbusch2012}, whose squared error is bracketed by the same tails,
\begin{equation}
\label{eq:tucker_floor_bracket}
\max_{\bullet}\sum_{j>r_\bullet}\bigl(\sigma_j^{(\bullet)}(\mu)\bigr)^{2}
\;\le\;
E_{\mathrm{Tucker}}(\mu)
\;\le\;
\sum_{\bullet}\sum_{j>r_\bullet}\bigl(\sigma_j^{(\bullet)}(\mu)\bigr)^{2},
\end{equation}
since truncating one mode alone already discards that mode's tail~\citep{Vannieuwenhoven2012}. With the third spatial axis of Case Study~V the factor $\sqrt{3}$ becomes $\sqrt{4}$. For an exact truncated HOSVD, the square root of \eqref{eq:hosvd_bound} divided by $\|\mathcal{X}(\mu)\|_F$ is the theoretical upper bound on the relative truncation error. The experiments report the realized truncation error of the implemented decomposition, whose factors on Cases~I--IV come from randomized mode-wise singular value decompositions.

\subsection{Product Grassmann geometry}
\label{app:product_grassmann}

The Grassmann manifold $\Gr(r,n)$ of $r$-dimensional subspaces of $\mathbb{R}^n$, its tangent space at a point, the principal angles $\theta_j=\arccos\sigma_j(U^\top\widetilde U)\in[0,\pi/2]$ between two subspaces and the induced distance $d_{\Gr}([U],[\widetilde U])=(\sum_j\theta_j^2)^{1/2}$ are standard \citep{Absil2008,Edelman1998}. Each training tensor gives the product point $\mathbf{U}_i=([U_{X,i}],[U_{Y,i}],[U_{T,i}])\in\mathcal{M}:=\Gr(r_X,X)\times\Gr(r_Y,Y)\times\Gr(r_T,T)$. The product norm and distance add across modes, and the tangent space at the reference has dimension $r_X(X-r_X)+r_Y(Y-r_Y)+r_T(T-r_T)$ against $r_s(XY-r_s)$ for a single spatial subspace on $\Gr(r_s,XY)$. The reference is a mean of the training product points,
\begin{equation}
\label{app:eq_refmean}
\mathbf{U}_0
\in
\arg\min_{\mathbf{V}\in\mathcal{M}}
\sum_{i=1}^{N}
d_{\mathcal{M}}^2(\mathbf{V},\mathbf{U}_i),
\end{equation}
the Riemannian center of mass~\citep{karcher1977}, computed by the usual fixed-point iteration and unique when the training points lie in a small enough geodesic ball~\citep{afsari2011}. Principal angles are reported throughout as multiples of $\pi/2$, the value at which the logarithm ceases to be single-valued.

\subsection{Grassmann logarithm and exponential maps}
\label{app:log_exp}

Let $[U_0]\in\Gr(r,n)$ be the reference subspace in one mode and $[U]$ a subspace in a normal neighborhood of it, handled through its horizontal representative $Z$ with $U_0^\top Z=0$. With the thin singular value decomposition
\begin{equation}
\label{app:eq_log_svd}
(I-U_0U_0^\top)\,U\,(U_0^\top U)^{-1}
=
\widetilde U\,\Sigma\,\widetilde V^\top ,
\end{equation}
which requires the overlap $U_0^\top U$ to be invertible,
\begin{equation}
\label{app:eq_log_map}
\log_{[U_0]}([U])
=
\widetilde U\arctan(\Sigma)\,\widetilde V^\top,
\end{equation}
and, for a horizontal tangent vector with singular value decomposition $Z=W\Theta\widetilde V^\top$,
\begin{equation}
\label{app:eq_exp_map}
\exp_{[U_0]}(Z)
=
U_0\widetilde V\cos(\Theta)\,\widetilde V^\top
+
W\sin(\Theta)\,\widetilde V^\top.
\end{equation}
Both maps are standard~\citep{Absil2008,Edelman1998,Bendokat2020}. The product-manifold maps apply these formulas to the $X$, $Y$ and $T$ factors independently, and the singular values in \eqref{app:eq_log_svd} are the tangents of the principal angles between $[U]$ and $[U_0]$.

\begin{assumption}[Single reference neighborhood]
\label{app:assump_chart}
The training product points and the query product point lie in a common local
neighborhood of $\mathbf{U}_0$ on which the mode-wise Grassmann logarithm maps
are uniquely defined (equivalently, every principal angle to the reference
is strictly below $\pi/2$) and the interpolated tangent coordinates
remain within this neighborhood.
\end{assumption}

The implementation uses $\theta_{\mathrm{safe}}=0.9\,(\pi/2)$ as an offline diagnostic margin and as the online clipping threshold. In overlap terms the margin reads $\sigma_{\min}(U_{\bullet,0}^\top U_{\bullet,i})\ge\cos\theta_{\mathrm{safe}}$. Offline, the largest training principal angle of every benchmark is measured against it, and training samples past the margin are admitted as long as their angles are below $\pi/2$, where the logarithm is single-valued. Case~IV exceeds the margin, with a largest training angle of $0.97\,(\pi/2)$, while remaining inside that region. Online, the largest singular value of each predicted representative $\hat Z_\bullet(\mu)$ is checked before the exponential map, and the representative is rescaled to $\theta_{\mathrm{safe}}$ when that value exceeds the threshold.

\subsection{Procrustes alignment and coordinate compatibility}
\label{app:procrustes}

With $U_{\bullet,0}$ the reference basis in mode $\bullet\in\{X,Y,T\}$ and the singular value decomposition of the overlap
\begin{equation}
\label{app:eq_overlap_svd}
U_{\bullet,0}^\top U_{\bullet,i}
=
L_{\bullet,i}\Sigma_{\bullet,i}R_{\bullet,i}^\top ,
\end{equation}
the Procrustes rotation of the main text is
\begin{equation}
\label{app:eq_procrustes_rotation}
Q_{\bullet,i}=R_{\bullet,i}L_{\bullet,i}^\top.
\end{equation}
The aligned factor is $\bar U_{\bullet,i}=U_{\bullet,i}Q_{\bullet,i}$ and the aligned core is $\bar{\mathcal{G}}_i=\mathcal{G}_i\times_1 Q_{X,i}^\top\times_2 Q_{Y,i}^\top\times_3 Q_{T,i}^\top$.

\begin{assumption}[Nonsingular overlaps]
\label{app:assump_overlap}
For every mode $\bullet\in\{X,Y,T\}$ and every training sample $i$, the
overlap matrix $U_{\bullet,0}^\top U_{\bullet,i}$ is nonsingular. Equivalently, no principal angle between the sample and reference subspaces
reaches $\pi/2$.
\end{assumption}

Under Assumption~\ref{app:assump_overlap} the rotation is unique and the alignment is well defined.

Rotating a factor from $U$ to $UQ$ keeps the subspace unchanged and forces the core to absorb the inverse rotation,
\begin{equation}
\label{app:eq_counter_rotation}
\mathcal{G}\times_n U=\bigl(\mathcal{G}\times_n Q^\top\bigr)\times_n (UQ).
\end{equation}
A predicted factor in an arbitrary orientation therefore does not combine with the predicted aligned core. On the training side, Procrustes fixes each aligned factor as the representative of its subspace whose overlap with the reference is symmetric positive definite, the cores are counter-rotated to match, and the aligned-core model is trained in those coordinates. Alignment is an exact change of coordinates: the factor rotation and the core counter-rotation cancel in the reconstruction by \eqref{app:eq_counter_rotation}, and the step contributes no approximation error.

\begin{proposition}[Coordinate compatibility]
\label{app:prop_coordinate}
Under Assumptions~\ref{app:assump_chart} and~\ref{app:assump_overlap}, the
Grassmann exponential map \eqref{app:eq_exp_map} and the Procrustes alignment
\eqref{app:eq_procrustes_rotation} return, in each mode, the unique orthonormal
representative of their argument subspace whose overlap with the reference basis
$U_{\bullet,0}$ is symmetric and positive definite. The exponential-predicted
query factors and the Procrustes-aligned training factors are therefore
expressed in one common reference frame. The predicted core then combines
with the predicted factors in the Tucker reconstruction without any further
rotation.
\end{proposition}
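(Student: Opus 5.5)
The plan is to prove three facts in order: a uniqueness lemma for the canonical representative, then that Procrustes produces it, then that the exponential map produces it. Compatibility of the reconstruction then follows from the counter-rotation identity \eqref{app:eq_counter_rotation}.

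\textbf{Uniqueness of the canonical representative.} Fix a mode and a subspace $[U]$ whose overlap $U_0^\top U$ is nonsingular (Assumption~\ref{app:assump_overlap}). Every orthonormal representative has the form $UQ$ with $Q\in O(r)$, and its overlap is $U_0^\top U Q$. Write the polar decomposition $U_0^\top U=PO$ with $P$ symmetric positive definite and $O$ orthogonal; both factors are unique because the matrix is invertible. The overlap $U_0^\top UQ=POQ$ is symmetric positive definite exactly when $OQ=I$, by uniqueness of the polar factor of an invertible matrix. Hence there is exactly one such representative, namely $UO^\top$.

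\textbf{Procrustes gives this representative.} With the SVD \eqref{app:eq_overlap_svd}, $U_{0}^\top U_iQ_i=L\Sigma R^\top RL^\top=L\Sigma L^\top$. This matrix is symmetric, and it is positive definite because $\Sigma>0$ under nonsingularity. So $\bar U_i$ is the canonical representative.

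\textbf{The exponential map gives this representative.} For $Z=W\Theta\widetilde V^\top$ with $U_0^\top Z=0$, we have $U_0^\top W=0$ on the columns that carry nonzero $\Theta$. Columns with $\theta_j=0$ contribute nothing to $Z$ and can be chosen orthogonal to $U_0$. It follows that $U_0^\top\exp_{[U_0]}(Z)=\widetilde V\cos(\Theta)\widetilde V^\top$, which is symmetric. It is positive definite provided every $\theta_j<\pi/2$. This is guaranteed by Assumption~\ref{app:assump_chart} together with the online clipping at $\theta_{\mathrm{safe}}$.

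I expect the main obstacle to be the bookkeeping here, because the thin SVD of $Z$ is not unique when singular values are zero or repeated. I would handle it by noting that the expression $U_0\widetilde V\cos\Theta\widetilde V^\top+W\sin\Theta\widetilde V^\top$ equals $U_0\cos(\sqrt{Z^\top Z})+Z\,\mathrm{sinc}(\sqrt{Z^\top Z})$. This is a matrix function of $Z$ alone, so the result is independent of the SVD chosen. I would also check orthonormality, $\cos^2+\sin^2=I$, since it uses $W^\top W=I$ and $U_0^\top W=0$.

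\textbf{Compatibility of the reconstruction.} Every aligned training factor and every predicted query factor is the unique representative whose overlap with the fixed basis $U_{\bullet,0}$ is symmetric positive definite. Both are therefore obtained by the same rule relative to $U_{\bullet,0}$, which fixes the coordinates in each subspace. The aligned cores $\bar{\mathcal G}_i$ are coefficients in these coordinates, and by \eqref{app:eq_counter_rotation} alignment leaves each training reconstruction unchanged. The predicted core \eqref{eq:online_core} is an affine combination of the aligned cores and so carries the same coordinate convention. It is combined in \eqref{eq:final_reconstruction} with the canonical predicted factors without further rotation.

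As a consistency check, a query exactly at a training parameter reproduces $\bar U_{\bullet,i}$. This holds because $\exp_{[U_0]}\circ\log_{[U_0]}$ is the identity on the normal neighborhood, and uniqueness of the canonical representative then identifies the output with $\bar U_{\bullet,i}$.
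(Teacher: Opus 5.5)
Your proposal is correct and follows the paper's proof. Both compute the overlaps $L\Sigma L^\top$ and $\widetilde V\cos\Theta\,\widetilde V^\top$ and conclude by uniqueness of the representative whose overlap is symmetric positive definite; you prove that uniqueness via the polar decomposition, where the paper only asserts it (its $Q=RL^\top$ is your $O^\top$).

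Two small corrections. First, the zero-angle columns of $W$ cannot always all be chosen orthogonal to $U_{\bullet,0}$. This is impossible once $2r_\bullet$ exceeds the mode size, e.g.\ $r_T=T$ in Case~IV. You only need $U_{\bullet,0}^\top W\sin\Theta=0$, as the paper writes it, or your SVD-free formula. Second, your closing check holds when $\hat Z_\bullet=Z_{\bullet,i}$, but not for a query at $\mu_i$. There the $k$-neighbor Gaussian weights are not concentrated on sample $i$.
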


\begin{proof}
Let $\hat Z=W\Theta\widetilde V^\top$ be the thin SVD of the predicted tangent vector in mode $\bullet$, with $U_{\bullet,0}^\top W\sin\Theta=0$ and $\Theta=\mathrm{diag}(\theta_1,\ldots,\theta_{r_\bullet})$, $\theta_j\ge0$. Multiplying \eqref{app:eq_exp_map} on the left by $U_{\bullet,0}^\top$ gives
\begin{equation}
\label{app:eq_exp_overlap}
U_{\bullet,0}^\top\hat{U}_\bullet
=
\widetilde{V}\cos\Theta\,\widetilde{V}^\top,
\end{equation}
symmetric, and positive definite since every $\theta_j<\pi/2$ under Assumption~\ref{app:assump_chart}. On the Procrustes side, \eqref{app:eq_overlap_svd} and \eqref{app:eq_procrustes_rotation} give
\begin{equation}
\label{app:eq_procrustes_overlap}
U_{\bullet,0}^\top\bar{U}_{\bullet,i}
=
L_{\bullet,i}\Sigma_{\bullet,i}R_{\bullet,i}^\top R_{\bullet,i}L_{\bullet,i}^\top
=
L_{\bullet,i}\Sigma_{\bullet,i}L_{\bullet,i}^\top,
\end{equation}
symmetric, and positive definite since $\Sigma_{\bullet,i}\succ0$ under Assumption~\ref{app:assump_overlap}. For a subspace $[U]$ with nonsingular overlap $U_{\bullet,0}^\top U=L\Sigma R^\top$, the overlap $U_{\bullet,0}^\top UQ$ is symmetric positive definite for exactly one $Q\in O(r_\bullet)$, namely $Q=RL^\top$. Both \eqref{app:eq_exp_overlap} and \eqref{app:eq_procrustes_overlap} have the form $A\Delta A^\top$ with $A$ orthogonal and $\Delta\succ0$ diagonal. Each construction therefore returns that unique representative of its own subspace, anchored to the same $U_{\bullet,0}$.
\end{proof}

\subsection{Aligned-core bound}
\label{app:core_of_cores}

With $G\in\mathbb{R}^{(r_Xr_Yr_T)\times N}$ the matrix of vectorized aligned-core residuals about their mean, $G\approx U_p\Sigma_pV_p^\top$ its rank-$r_p$ truncated SVD and $w(\mu)$ the kernel weights of a query, the aligned-core model of the main text predicts $\bar{\mathcal{G}}(\mu)=\bar{\mathcal{G}}_{\mathrm{mean}}+\mathcal{G}_{\mathrm{basis}}\bar{\times}_4 V_p^\top w(\mu)$ with $\mathcal{G}_{\mathrm{basis}}=\mathrm{reshape}(U_p\Sigma_p,[r_X,r_Y,r_T,r_p])$, and the full field is
\begin{equation}
\label{app:eq_reconstruction}
\hat{\mathcal{X}}(\mu)
=
\bar{\mathcal{G}}(\mu)
\times_1 \bar U_X(\mu)
\times_2 \bar U_Y(\mu)
\times_3 \bar U_T(\mu),
\qquad
\bar U_\bullet(\mu):=\hat U_\bullet(\mu),
\end{equation}
where Proposition~\ref{app:prop_coordinate} justifies the identification on the right.

Because the factors are orthonormal, truncating the core model at rank $r_p$ changes the field by exactly the norm of the discarded core residual. With $\hat{\mathcal{X}}_{\mathrm{full}}(\mu)$ the field reconstructed from the untruncated residual model and $\hat{\mathcal{X}}_{r_p}(\mu)$ the field from the rank-$r_p$ model,
\begin{equation}
\label{app:eq_coc_propagation}
\begin{split}
\bigl\|\hat{\mathcal{X}}_{\mathrm{full}}(\mu)-\hat{\mathcal{X}}_{r_p}(\mu)\bigr\|_F
&=\bigl\|(I-U_pU_p^{\top})G\,w(\mu)\bigr\|_2\\
&\le\sigma_{r_p+1}(G)\,\|w(\mu)\|_2
\le\sigma_{r_p+1}(G),
\end{split}
\end{equation}
the last step because the weights are nonnegative and sum to one. The aligned-core truncation error is therefore bounded by a single discarded singular value. At the selected $r_p$ the bound is $0.34$, $4.51$, $0.83$, $1.62$ and $1.34$ percent of the mean training field norm for Cases~I--V. It is a worst-case bound over an arbitrary weight vector.

\subsection{Algorithmic summary}
\label{app:algorithm}

\begin{algorithm}[!htb]
\caption{Offline training of the product-Grassmann Tucker surrogate}
\label{app:alg_offline}
\begin{algorithmic}[1]
\Require $\{(\mu_i,\mathcal{X}_i)\}_{i=1}^{N}$, candidate ladders for $(r_X,r_Y,r_T)$, $r_p$, $k$
\State $(r_X,r_Y,r_T),\,r_p,\,k \gets$ leave-one-out selection \Comment{each score runs lines \ref{alg:hosvd_start}--\ref{alg:coc_end}}
\For{$i=1,\dots,N$}\label{alg:hosvd_start}
  \State $(\mathcal{G}_i;\,U_{X,i},U_{Y,i},U_{T,i}) \gets \mathrm{HOSVD}_{(r_X,r_Y,r_T)}(\mathcal{X}_i)$\label{alg:hosvd}
\EndFor
\For{$\bullet\in\{X,Y,T\}$}
  \State $U_{\bullet,0} \gets \arg\min_{V}\sum_{i} d_{\Gr}^{2}([V],[U_{\bullet,i}])$ \Comment{fixed-point iteration, Eq.~\eqref{app:eq_refmean}}
  \For{$i=1,\dots,N$}
    \State $L\Sigma R^\top \gets \mathrm{svd}(U_{\bullet,0}^\top U_{\bullet,i})$,\quad $Q_{\bullet,i}\gets RL^\top$ \Comment{Eq.~\eqref{app:eq_procrustes_rotation}}
    \State $\bar U_{\bullet,i}\gets U_{\bullet,i}Q_{\bullet,i}$ \Comment{aligned factor}
    \State $Z_{\bullet,i}\gets \log_{[U_{\bullet,0}]}([U_{\bullet,i}])$ \Comment{Eq.~\eqref{app:eq_log_map}}
  \EndFor
\EndFor
\For{$i=1,\dots,N$}
  \State $\bar{\mathcal{G}}_i \gets \mathcal{G}_i\times_1 Q_{X,i}^\top\times_2 Q_{Y,i}^\top\times_3 Q_{T,i}^\top$ \Comment{aligned core}
\EndFor
\State $\bar{\mathcal{G}}_{\mathrm{mean}}\gets \frac{1}{N}\sum_{i}\bar{\mathcal{G}}_i$,\quad $G(:,i)\gets \mathrm{vec}(\bar{\mathcal{G}}_i-\bar{\mathcal{G}}_{\mathrm{mean}})$
\State $U_p\Sigma_pV_p^\top \gets \mathrm{svd}_{r_p}(G)$ \Comment{aligned-core model}
\State $\mathcal{G}_{\mathrm{basis}}\gets \mathrm{reshape}(U_p\Sigma_p,[r_X,r_Y,r_T,r_p])$\label{alg:coc_end}
\Ensure $\mathbf{U}_0=(U_{X,0},U_{Y,0},U_{T,0})$,\; $\{Z_{\bullet,i}\}$,\; $\bar{\mathcal{G}}_{\mathrm{mean}}$,\; $\mathcal{G}_{\mathrm{basis}}$,\; $V_p$
\end{algorithmic}
\end{algorithm}

\begin{algorithm}[!ht]
\caption{Online prediction at a query parameter $\mu$}
\label{app:alg_online}
\begin{algorithmic}[1]
\Require $\mu$, the output of Algorithm~\ref{app:alg_offline}, $k$, $\theta_{\mathrm{safe}}=0.9\,(\pi/2)$
\State $w\gets$ Gaussian kernel weights on the $k$ nearest training parameters, $\sum_i w_i=1$
\For{$\bullet\in\{X,Y,T\}$}
  \State $\hat Z_\bullet\gets \sum_{i:\,w_i>0} w_i Z_{\bullet,i}$
  \State \textbf{if} $\sigma_{\max}(\hat Z_\bullet)>\theta_{\mathrm{safe}}$ \textbf{then} $\hat Z_\bullet\gets \bigl(\theta_{\mathrm{safe}}/\sigma_{\max}(\hat Z_\bullet)\bigr)\hat Z_\bullet$ \Comment{safety margin}
  \State $\hat U_\bullet\gets \exp_{[U_{\bullet,0}]}(\hat Z_\bullet)$ \Comment{Eq.~\eqref{app:eq_exp_map}}
  \State $\bar U_\bullet(\mu)\gets \hat U_\bullet$ \Comment{no alignment step, Prop.~\ref{app:prop_coordinate}}
\EndFor
\State $a\gets V_p^\top w$,\quad $\bar{\mathcal{G}}(\mu)\gets \bar{\mathcal{G}}_{\mathrm{mean}}+\mathcal{G}_{\mathrm{basis}}\,\bar{\times}_4\,a$ \Comment{aligned-core model}
\State $\hat{\mathcal{X}}(\mu)\gets \bar{\mathcal{G}}(\mu)\times_1\bar U_X(\mu)\times_2\bar U_Y(\mu)\times_3\bar U_T(\mu)$ \Comment{Eq.~\eqref{app:eq_reconstruction}}
\Ensure $\hat{\mathcal{X}}(\mu)$
\end{algorithmic}
\end{algorithm}

In the experiments, the decomposition of line~\ref{alg:hosvd} of Algorithm~\ref{app:alg_offline} uses randomized mode-wise SVDs on Cases~I--IV and full mode-wise SVDs on Case~V. The implementation re-applies the Procrustes rotation to each predicted factor as a floating-point safeguard. Proposition~\ref{app:prop_coordinate} shows that this rotation is the identity up to numerical roundoff. A field with a third spatial axis adds one Grassmann factor and one core index. Every mode-wise step then runs once more, and the reconstruction becomes $\bar{\mathcal{G}}(\mu)\times_1\bar U_X(\mu)\times_2\bar U_Y(\mu)\times_3\bar U_Z(\mu)\times_4\bar U_T(\mu)$. Case Study~V uses this form.

\section{Supplemental Numerical Model Details}
\label{app:numerical_details}

This section records the constants, designs, data provenance and implementation details behind the numerical experiments. The governing equations are in the main text.

\subsection{Case Study I: anisotropic diffusion-reaction}
\label{app:case1_details}

The parameter vector is $\mu=(D_x,D_y,\lambda)\in[0.003,0.008]\times[0.002,0.005]\times[0.40,0.90]$. The $40$ samples are a Latin hypercube over this box under a fixed seed; $30$ are used for training and $10$ are held out for testing, chosen among the interior points, at least $0.15$ of every parameter range from the boundary. A fixed rule kept in the repository picks the test points so that they are spread out and each has training neighbors around it. The source is a fixed separable Gaussian in space times a fixed periodic factor in time, with the constants in the repository. The fields are computed on the $256\times256$ grid by a Fourier spectral method on the periodic square with an integrating-factor step on the linear operator and the forcing evaluated at the step midpoint, and stored at $100$ uniformly spaced times over $[0,1]$.

\subsection{Case Study II: double-slit wave interference}
\label{app:case2_details}

\begin{sloppypar}
The $36$ runs form the full $6\times6$ grid of amplitude $\mu_1\in\{60,72,84,96,108,120\}$ against frequency parameter $\mu_2\in\{3,3.5,4,4.5,5,5.5\}$; $30$ are used for training and $6$ interior grid points, chosen by a fixed interior-and-separation rule kept in the repository, are held out for testing. The archive holds the stored fields, the time vector and the parameter pair of each run. It does not record the integration scheme that produced them, the slit positions or the treatment of the other boundaries. The frequency parameter $\mu_2$ sets a cyclic
frequency of $\mu_2/2$. The initial condition is quiescent. The first stored frame, at $t=0.002$, already carries the disturbance the boundary has launched.
\end{sloppypar}

\subsection{Case Study III: multimodal forced heat equation}
\label{app:case3_details}

The parameter vector is $\mu=(\mu_1,\mu_2)\in[0.5,1.5]^2$. Of $40$ uniform draws over this square under a fixed seed, the first $30$ train the surrogates and the last $10$ test them. The initial field is a sum of eight sine modes, $u(x,y,0;\mu)=\sum_{k=1}^{8}a_k(\mu)\sin(m_k^x\pi x)\sin(m_k^y\pi y)$, and the forcing a sum of nine Gaussian-in-time pulses on sine modes, $S(x,y,t;\mu)=\sum_{p=1}^{9}A_p(\mu)\exp(-(t-t_p)^2/(2\sigma_p^2))\sin(m_p^x\pi x)\sin(m_p^y\pi y)$, with every mode index in $\{1,2,3\}$, pulse centers $t_p$ between $0.05$ and $0.38$ and widths $\sigma_p$ of $0.020$ to $0.030$. The spatial content of every run is therefore spanned by three sine functions per axis. The amplitude maps $a_k(\mu)$ and $A_p(\mu)$, sinusoids of $\mu_1$ scaled on some terms by $\mu_2$, and the diffusivity $D(\mu_2)$ of the governing equation depend on the parameter. Their constants are those of the generator in the repository. The fields are integrated on the $128\times128$ grid by a Peaceman--Rachford alternating-direction scheme and stored at $101$ times.

\subsection{Case Study IV: tube flow (CFDBench)}
\label{app:case5_details}

The fields come from the \texttt{tube} problem of CFDBench~\citep{luo2023cfdbench,cfdbench2023data}, whose authors document the solver and the data. Of its three subsets the one varying the fluid properties is used, since it keeps a single domain and varies two parameters. The subset provides a $10\times10$ grid over the density index and the viscosity, both used as distributed. The stored runs differ in length. A common length of $15$ frames is fixed and the two of the $100$ cases shorter than that are dropped, leaving $98$; $78$ are used for training and $20$, drawn at random under a fixed seed, are held out for testing, $8$ of them on the boundary of the grid. The two stored velocity components are reduced to their magnitude. The archive records no numeric time step, and the frame index is the common temporal coordinate. Repeating the study at $18$ frames, where the rule selects ranks $(20,20,18)$, gives $1.35$ percent for M1 against $1.25$ for M2 on the original split, and the choice of length does not decide the comparison. Ranks $(20,20,15)$ are selected by the rule of the main text, with the temporal mode retained in full. The reference is initialization dependent here, with a spread of $0.82\,(\pi/2)$ across seeds in the first spatial mode, consistent with a maximum training angle of $0.97\,(\pi/2)$. The test error changes by less than $0.001$ percentage points over five references, because every downstream quantity is recomputed against whichever reference the iteration returns and Proposition~\ref{app:prop_coordinate} holds for each.

\subsection{Case Study V: three-dimensional advection-diffusion}
\label{app:case3d_details}

The initial condition superposes two structures of different width,
\begin{equation}
\begin{split}
u(\vect{x},0;\mu)=\;&
\exp\!\left[-\frac{d(x,x_0)^2+d(y,\mu_y)^2+d(z,\mu_z)^2}{2\sigma_n^2}\right]\\[2pt]
&+\;\alpha\,
\exp\!\left[-\frac{d(x,x_b)^2+d(y,y_b)^2+d(z,z_b)^2}{2\sigma_w^2}\right],
\end{split}
\label{eq:case3d_ic}
\end{equation}
with $\vect{x}=(x,y,z)$ the spatial coordinate, $\sigma_n=0.15$ and $\sigma_w=0.28$ the widths of the narrow and the wide structure, $x_0=0.14$, $(x_b,y_b,z_b)=(0.62,0.40,0.62)$, and $d(a,b)=((a-b+\tfrac12)\bmod 1)-\tfrac12$ the signed periodic distance. The parameter vector is $\mu=(\mu_y,\mu_z,\alpha)\in[0.40,0.60]^2\times[0.15,1.00]$. The $50$ samples are a Latin hypercube over this box under a fixed seed; $40$ are used for training and $10$, drawn at random from the interior points at least a tenth of each side's range from its edges, are held out for testing, and no prediction extrapolates. The propagator is applied in Fourier space with NumPy's FFT on a periodic $32^3$ grid. It evaluates the spatially discretized evolution directly at the stored times. No time-stepping error enters the comparison, while the fields inherit the spatial discretization of the $32^3$ Fourier grid. Each sample is a $32\times32\times32\times24$ tensor of $786{,}432$ entries, regenerated on demand.

\subsection{Implementation and hyperparameter selection}
\label{app:surrogate_impl}

Across all cases the interpolation weights use a Gaussian kernel over the $k$ nearest training parameters in a standardized parameter space, each coordinate centered and scaled to unit variance with the training statistics and neighbors ranked by Euclidean distance. The bandwidth $h$ is set per query as the larger of the standard deviation and half the mean of the $k$ neighbor distances, and $w_i\propto\exp(-d_i^2/(2h^2))$, normalized to sum to one over the selected neighbors. M1's Tucker factors are formed from randomized mode-wise singular value decompositions~\citep{halko2011randomized} on Cases~I--IV, through scikit-learn~1.8.0 with its default oversampling and power-iteration settings and a fixed random state of $42$, and from full mode-wise SVDs on Case~V. The spatial basis of M2 comes from a full SVD in every case, in double precision. The tuples the leave-one-out rule of the main text selects carry mean training truncation errors of $0.10$, $2.16$, $0.88$, $0.47$ and $1.88$ percent for Cases~I--V. On Case~V the $r_s$ ladder is scored on ten evenly spaced training folds, since every fold refits the single-subspace surrogate on $\Gr(r_s,32{,}768)$, and every other ladder of every case is scored on all training folds. The accuracy results of Cases~I--IV use a fixed random seed of $42$ under NumPy~2.4.2, and Case~V its own fixed seed of $7$ for the design and the original split. Table~\ref{tab:app_config} lists the neighborhood size and core rank each benchmark selects on training data, beside the largest principal angle between a training factor and the reference and the smallest singular value of a training-to-reference overlap. The ranks and $r_s$ are those of the cross-case table of the main text.
\begin{table}[!htb]
\centering
\caption{Selected neighborhood size $k$ and core rank $r_p$, largest training angle in units of $\pi/2$ and smallest overlap singular value, on the original split.}
\label{tab:app_config}
\footnotesize
\begin{tabular}{lcccc}
\toprule
Case & $k$ & $r_p$ & Max angle & Overlap $\sigma_{\min}$ \\
\midrule
I & $5$ & $3$ & $0.27$ & $0.91$ \\
II & $3$ & $5$ & $0.46$ & $0.75$ \\
III & $3$ & $5$ & $0.84$ & $0.25$ \\
IV & $4$ & $10$ & $0.97$ & $0.047$ \\
V & $5$ & $4$ & $0.61$ & $0.569$ \\
\bottomrule
\end{tabular}
\end{table}

The tenth-of-a-point band on the rank ladders avoids raising the rank for validation improvements smaller than that, since rank is expensive online and the leave-one-out score flattens toward the top of a ladder. The rule follows the tensor-on-tensor regression of \citet{reisi2021mtot}, which cross-validates its ranks over a grid, with a short fixed ladder as the grid. An earlier form of the rule applied the hundredth-of-a-point band to every ladder and selected higher rungs on the wave and heat cases, $(13,19,19)$ with $r_s=45$ and $(3,3,7)$. The tenth-of-a-point band of the reported analysis was adopted after a comparison of those configurations on the original benchmark splits. The reported configurations of the original splits therefore depend on that comparison, and the nested study of Table~\ref{tab:app_nested} does not inherit it. On each of the ten resampled splits every model choice, the Tucker ranks, $r_p$, $k$ and $r_s$, is re-selected inside that split's training set by the leave-one-out rule, under the tenth-of-a-point band and, from the same leave-one-out scores, under the hundredth-of-a-point band. The rank ladder is scored at the provisional $(k,r_p)$ of the reported protocol, and on Case~V the $r_s$ ladder on ten evenly spaced training folds. No test sample of the split enters the selection of the ranks, $r_p$, $k$ or $r_s$. Under either band every upper limit stays below the one-point tolerance. The two bands select the same ranks on Cases~I and~IV and on nine of the ten splits of Case~V, while on the wave and heat cases the narrower band raises the ranks on every split and moves the mean gap by at most $0.05$ points.

\begin{table}[!htb]
\centering
\caption{Nested selection on the ten resampled splits under the $0.1$- and $0.01$-point bands: the rank configurations selected and the mean M1 minus M2 gap in points with its paired $95\%$ interval.}
\label{tab:app_nested}
\footnotesize
\setlength{\tabcolsep}{4pt}
\begin{tabular}{lllll}
\toprule
& \multicolumn{2}{c}{Band $0.1$} & \multicolumn{2}{c}{Band $0.01$} \\
\cmidrule(lr){2-3}\cmidrule(lr){4-5}
Case & M1 ranks & Gap $[95\%]$ & M1 ranks & Gap $[95\%]$ \\
\midrule
I & $(2,4,4)$ & $-0.00\ [-0.01,+0.00]$ & $(2,4,4)$ & $-0.00\ [-0.01,+0.00]$ \\
II & $(11,16,16)$ & $+0.36\ [+0.09,+0.63]$ & $(13,19,19)$ & $+0.31\ [+0.08,+0.55]$ \\
III & $(3,3,5)$ & $+0.05\ [+0.03,+0.07]$ & $(3,3,7)$ & $+0.00\ [-0.00,+0.01]$ \\
IV & $(20,20,15)$ & $+0.13\ [+0.09,+0.17]$ & $(20,20,15)$ & $+0.13\ [+0.09,+0.17]$ \\
V & $(5,2,2,5)$ & $+0.22\ [+0.16,+0.27]$ & $(5,2,2,5), (8,4,4,8)$ & $+0.19\ [+0.13,+0.25]$ \\
\bottomrule
\end{tabular}
\end{table}
Under the reported band the splits select $r_s=20$ or $45$ on Case~II, and the median interpolation-size ratio there is $29.0\times$ with range $[12.9,\,29.0]$.

Before the leave-one-out rule, a truncation-floor rule set the M1 ranks as the smallest tuple whose mean HOSVD truncation error on the training fields falls below three percent, while the single-subspace rank was still chosen by validation error. That asymmetry favors whichever method is tuned for accuracy over compression. Both rules are therefore reported. Table~\ref{tab:rank_rule_v6} sets them side by side, with the gap as mean M1 minus M2 over ten redraws in points and both ratios at the single-subspace rank that leave-one-out selects. On Cases~I, III and~V the two rules select the same ranks. On Cases~II and~IV the leave-one-out rule gives up compression for accuracy, and the truncation-floor gap of Case~IV, $+1.00$ points, falls on the prespecified tolerance itself.

\begin{table}[!htb]
\centering
\caption{The two rank-selection rules side by side.}
\label{tab:rank_rule_v6}
\small
\begin{tabular}{lrrrcrrr}
\toprule
 & \multicolumn{3}{c}{Truncation floor} & & \multicolumn{3}{c}{Leave-one-out} \\
\cmidrule{2-4}\cmidrule{6-8}
Case & ranks & gap & ratio & & ranks & gap & ratio \\
\midrule
I & (2,4,4) & $-0.00$ & $169.3\times$ & & (2,4,4) & $-0.00$ & $169.3\times$ \\
II & (9,13,13) & $+0.51$ & $15.8\times$ & & (11,16,16) & $+0.07$ & $12.9\times$ \\
III & (3,3,4) & $+0.13$ & $125.8\times$ & & (3,3,4) & $+0.13$ & $125.8\times$ \\
IV & (12,12,12) & $+1.00$ & $35.8\times$ & & (20,20,15) & $+0.12$ & $22.1\times$ \\
V & (5,2,2,5) & $+0.22$ & $481.9\times$ & & (5,2,2,5) & $+0.22$ & $481.9\times$ \\
\bottomrule
\end{tabular}
\end{table}

The candidate ladders, five to nine rank tuples per case and the $r_s$, $r_p$ and $k$ ladders, were fixed before any run and are kept in the repository. The Case~III ladder also varies the temporal rank on its own at the exact spatial rank three, beside its shape-matched tuples. On Case~I every rank candidate falls inside the tie band (leave-one-out error $1.410$ to $1.414$ percent across the ladder), and the rule returns the smallest, $(2,4,4)$. On Case~II the test error is lowest at $k=4$ while the training-selected value is $k=3$, and the training choice is kept.

\subsection{Timing and rank sensitivity}
\label{app:timing}

All runs execute on one workstation (Apple M2~Pro with 16~GB, NumPy~2.4.2 with Accelerate BLAS, one process, library-default threading). The cached query stores every training decomposition, tangent coordinate and coefficient matrix offline and sums over the active neighbors only, because the kernel weights are zero outside the $k$ nearest training parameters. The weights themselves, identical for the two methods, are computed before the timed region, and the timed query starts at the weighted tangent combination. These timings correspond to the cached many-query setting of the main text. Each test query is timed $50$ times after a warm-up pass, and the mean over all test queries is reported, with the Grassmann stage (weighted tangent combination, exponential maps, core or coefficient prediction) and the reconstruction stage (full-field assembly) timed separately. The M1 timings include the re-application of the Procrustes rotation to each predicted factor, the floating-point safeguard of \ref{app:algorithm}. M2 carries no equivalent step. Both methods use the safeguarded exponential map of Algorithm~\ref{app:alg_online}, with its clipping check and re-orthonormalization. For M2 the clip never activates on any reported split. Its safeguard still costs a thin singular value decomposition and a QR factorization of the basis, and that cost grows with the ambient dimension $XY$, or $XYZ$ on Case~V. All online and offline timings come from one session. Table~\ref{tab:app_offline} sets the measured interpolation ratio of each benchmark against the grid estimate $XY/(X+Y+T)$, or $XYZ/(X+Y+Z+T)$ on Case~V, and lists the offline training times with the break-even query count, the offline penalty of M1 divided by its cached per-query saving. The realized ratios fall within a small factor of the grid estimate. The order of magnitude therefore follows from the grid shape, and the rank selection corrects it. Training is faster for M1 on Cases~I and~IV and slower on Cases~II, III and~V. Rounded to two figures, the largest break-even count is $1{,}500$ queries, on Case~III, and the dashes mark the two cases where M1 is ahead from the first query.
\begin{table}[!htb]
\centering
\caption{Interpolation ratio against the grid estimate, offline training times on the original split and break-even query counts.}
\label{tab:app_offline}
\footnotesize
\setlength{\tabcolsep}{6pt}
\begin{tabular}{lrrrrr}
\toprule
& \multicolumn{2}{c}{Interpolation} & \multicolumn{2}{c}{Offline time (s)} & \\
\cmidrule(lr){2-3}\cmidrule(lr){4-5}
Case & Grid est. & Realized/est. & M1 & M2 & Break-even \\
\midrule
I & 107 & 1.6 & 7.7 & 32.1 & -- \\
II & 10 & 1.3 & 4.5 & 1.7 & 400 \\
III & 46 & 2.7 & 12.4 & 2.1 & 1,500 \\
IV & 29 & 0.8 & 1.8 & 2.7 & -- \\
V & 273 & 1.8 & 6.4 & 5.2 & 170 \\
\bottomrule
\end{tabular}
\end{table}

Because the interpolation-size ratio depends on M2's rank, which is selected independently, M2 is refitted at every rank of its ladder on the original split. Its error is flat above the selected rank on Cases~I, III and~V ($1.01$ percent from rank $5$ to rank $30$ on Case~I). A larger rank there inflates the ratio without helping M2. On the wave case the error falls by $0.05$ points from rank $20$ to rank $60$ while the ratio triples, and the tie band resolves that trade toward the smaller object. On Case~IV every candidate above $15$ fits the same model, because $15$ is that benchmark's matrix-rank ceiling.

\subsection{Resampling}
\label{app:resampling}

The resampling study of the main text redraws the held-out test set with seeds $1$ through $10$ and refits every method on each split, with ranks and hyperparameters fixed at the values above. Cases~I and~II redraw from the interior candidate pools of their original splits, Cases~III and~IV uniformly from the design, and Case~V from the interior of its box, a tenth of each side's range in from the edges. M1 and M2 are evaluated on identical splits and test samples, and the error differences therefore support a paired analysis. Table~\ref{tab:app_resampling} reports the mean M1 minus M2 gap over the ten splits with its paired $95\%$ $t$ interval and the relative margin, ten percent of the resampled M2 error. The resampled means themselves are the test-error columns of the main text's table of representation and prediction terms.

\begin{table}[!htb]
\centering
\caption{M1 minus M2 gap over ten resampled splits, in percentage points, as mean $\pm$ standard deviation, with its paired $95\%$ $t$ interval and the relative margin of ten percent of the M2 error.}
\label{tab:app_resampling}
\footnotesize
\begin{tabular}{lrrr}
\toprule
Case & M1$-$M2 & $95\%$ CI & Margin \\
\midrule
I   & $+0.00\pm0.01$ & $[-0.01,+0.00]$ & $0.12$ \\
II  & $+0.07\pm0.03$ & $[+0.04,+0.09]$ & $2.23$ \\
III & $+0.13\pm0.06$ & $[+0.09,+0.18]$ & $0.78$ \\
IV  & $+0.12\pm0.04$ & $[+0.09,+0.15]$ & $0.18$ \\
V   & $+0.22\pm0.09$ & $[+0.16,+0.28]$ & $0.68$ \\
\bottomrule
\end{tabular}
\end{table}

The observed M1 minus M2 difference counts as within tolerance when the upper limit of its paired $95\%$ interval stays below the one-point tolerance and below the relative margin, and every benchmark satisfies both criteria. On the wave, heat, external and three-dimensional benchmarks the whole interval lies above zero: M2 is more accurate there, by amounts more than an order of magnitude below the error levels. Because the test sets of two seeds overlap, a cluster bootstrap over simulations instead of splits ($10{,}000$ resamples, ten further redraws included) is the appropriate correction. Every cluster interval lies a factor of three or more below the tolerance, the widest being $[+0.08,+0.29]$ points on Case~V. Over the same splits the fixed core is $2.9$ to $5.0$ times worse than M1 and the unaligned core $2.2$ to $6.5$ times worse. The Procrustes alignment makes the aligned cores comparable enough to model. On the original split of each benchmark the fixed core gives $3.66$, $67.67$, $35.72$, $6.84$ and $19.12$ percent on Cases~I--V and the unaligned core $2.37$, $112.72$, $53.42$, $14.81$ and $17.37$ percent, against the M1 and M2 columns of the cross-case table of the main text.

On the wave case the gap to M2, $+0.07\pm0.03$ percentage points over the ten splits, reflects the temporal convention: M1 compresses time to rank $r_T$ while M2 keeps its full temporal coefficient matrix. Equipping M2 with a global rank-$r_T$ temporal basis costs it $0.063$ percentage points, $21.40$ against $21.34$ percent on the original split and $22.39\pm3.01$ against $22.33\pm3.03$ over the resampled splits. This brings it level with M1 and leaves every interpolation-size ratio unchanged, because the single-subspace storage is dominated by the spatial tangent matrices. The interpolation-size advantage of M1 therefore comes from the spatial tensorization, not from the temporal asymmetry of the comparison. On the heat case, where M1 truncates time to rank four against the $101$ coefficients of M2, the same variant costs M2 $0.217$ points, $6.63$ against $6.43$ percent on the original split and $8.03\pm1.64$ against $7.82\pm1.71$ over the resampled splits, more than the $+0.13$-point gap.

\subsection{Generalization and robustness checks}
\label{app:robustness}

To examine the sensitivity to the parameter dimension apart from the grid resolution, Case~III is repeated with the training count and the numerical discretization fixed while $d$ increases from $2$ to $6$. Mode $k$ takes its phase from $\mu_{1+(k \bmod (d-1))}$ and every mode takes its scale from $\mu_d$. This parameterization reproduces the main-text parameterization at $d=2$ and leaves the solver, grid, sample count, split and ranks untouched. As $d$ grows to $6$, both methods lose accuracy together, from $6.6$ and $6.4$ percent at $d=2$ to $19.7$ and $18.0$ percent at $d=6$. The mean training truncation error stays below $0.9$ percent, and the loss is prediction error. The gap holds at $+0.23$ points or below through $d=5$. It reaches $+0.98$ points at $d=6$ with an upper confidence bound of $1.20$, which exceeds the one-point tolerance.

A boundary split holds out the samples farthest from the design center in the standardized parameter space, where the surrogate predicts at the edge of the sampled domain with no surrounding neighbors. Errors rise for every method, by factors of $1.5$ to $3.8$ for M1, while M1 and M2 stay within $0.1$ percentage points on the two-dimensional benchmarks and $0.21$ on the three-dimensional one, where M1 is the lower of the two ($16.03$ against $16.24$ percent). Relative Gaussian noise at three percent of the field root-mean-square is added to every training tensor, and the surrogates are refitted on the ten resampled splits with two noise seeds per split. The noise moves the M1 error by less than $0.01$ points on Cases~I--III and~V and by $0.13$ points on the external benchmark, whereas M2 degrades on every case, the most on the heat case ($7.82$ to $9.34$ percent), with an excess over M1's degradation of up to $+1.52$ points whose paired interval excludes zero on every case. Under a spatially correlated perturbation of the same amplitude the separation survives on four of the five cases, the wave case degrading alike under both formats. The larger degradation of M2 is consistent with its fitted object, more than an order of magnitude larger than that of M1.

\label{app:robustness_box}
The last check pushes Case~V toward the boundary of the reference neighborhood. The reported transverse design box has width $0.20$. Table~\ref{tab:app_box} widens it to $0.60$ with everything else fixed: the seed, the interior split, the ranks $(5,2,2,5)$, $k=5$, $r_p=4$ and $r_s=6$.
\begin{table}[!htb]
\centering
\caption{Case~V as the transverse design box widens, at fixed ranks, split and neighborhood size, with angles in units of $\pi/2$ and errors in percent.}
\label{tab:app_box}
\footnotesize
\setlength{\tabcolsep}{5pt}
\begin{tabular}{lccrrr}
\toprule
& \multicolumn{2}{c}{Max training angle} & \multicolumn{2}{c}{Test error} & \\
\cmidrule(lr){2-3}\cmidrule(lr){4-5}
Box width & M1 ($y$ / $z$) & M2 & M1 & M2 & Gap (pp) \\
\midrule
0.20 & 0.615 / 0.589 & 0.644 & 6.90 & 6.69 & $+0.20$ \\
0.30 & 0.998 / 0.903 & 0.783 & 8.42 & 7.80 & $+0.62$ \\
0.40 & 0.979 / 0.919 & 0.828 & 15.63 & 10.32 & $+5.30$ \\
0.50 & 0.951 / 0.938 & 0.848 & 15.69 & 12.59 & $+3.11$ \\
0.60 & 0.995 / 0.932 & 0.819 & 16.41 & 14.13 & $+2.28$ \\
\bottomrule
\end{tabular}
\end{table}
From a box of $0.30$ on, training runs at opposite ends have transverse rank-two subspaces nearly orthogonal to the reference (largest M1 training angle $0.95$ to $0.998\,(\pi/2)$ against $0.78$ to $0.85$ for the rank-six subspace of M2). The safeguard never rescales a prediction, since the largest predicted angle stays at $0.81\,(\pi/2)$ for M1 and $0.68$ for M2. Both methods degrade as the box widens, M1 from $6.90$ to $16.41$ percent and M2 from $6.69$ to $14.13$. The fixed-core ablation doubles, from $19.12$ to $38.98$ percent, and the gap opens from $0.20$ points to $0.62$, $5.30$, $3.11$ and $2.28$ points at widths $0.30$ to $0.60$, with the training truncation error fixed at $1.9$ to $2.0$ percent. The mode-wise construction loses faster than the single-subspace one once its low-rank transverse subspaces turn through most of a right angle between neighboring runs, the regime that the extension with several local references is meant to address. A census of the safeguard over the five original and fifty resampled splits finds it active on three of the $616$ predictions, all on two resampled draws of the external case whose training angles pass $0.97\,(\pi/2)$, with a largest pre-clip predicted angle of $0.92\,(\pi/2)$.

{\fontsize{9}{10}\selectfont\setlength{\bibsep}{0.02ex plus 0.02ex}\setlength{\emergencystretch}{1.5em}
}


\begin{thebibliography}{105}
\expandafter\ifx\csname natexlab\endcsname\relax\def\natexlab#1{#1}\fi
\providecommand{\url}[1]{\texttt{#1}}
\providecommand{\href}[2]{#2}
\providecommand{\path}[1]{#1}
\providecommand{\DOIprefix}{doi:}
\providecommand{\ArXivprefix}{arXiv:}
\providecommand{\URLprefix}{URL: }
\providecommand{\Pubmedprefix}{pmid:}
\providecommand{\doi}[1]{\href{http://dx.doi.org/#1}{\path{#1}}}
\providecommand{\Pubmed}[1]{\href{pmid:#1}{\path{#1}}}
\providecommand{\bibinfo}[2]{#2}
\ifx\xfnm\relax \def\xfnm[#1]{\unskip,\space#1}\fi
\bibitem[{Deo et~al.(2026)Deo, Venkateshwaran, and
  Jaiman}]{deo2025physicsguided}
\bibinfo{author}{I.~K. Deo}, \bibinfo{author}{A.~Venkateshwaran},
  \bibinfo{author}{R.~K. Jaiman},
\newblock \bibinfo{title}{A physics-guided probabilistic surrogate modeling
  framework for digital twins of underwater radiated noise},
\newblock \bibinfo{journal}{Ocean Engineering} \bibinfo{volume}{358}
  (\bibinfo{year}{2026}) \bibinfo{pages}{125461}.
  \DOIprefix\doi{10.1016/j.oceaneng.2026.125461}.
\bibitem[{Musekamp et~al.(2025)Musekamp, Kalimuthu, Holzm{\"u}ller, Takamoto,
  and Niepert}]{musekamp2025al4pde}
\bibinfo{author}{D.~Musekamp}, \bibinfo{author}{M.~Kalimuthu},
  \bibinfo{author}{D.~Holzm{\"u}ller}, \bibinfo{author}{M.~Takamoto},
  \bibinfo{author}{M.~Niepert},
\newblock \bibinfo{title}{Active learning for neural {PDE} solvers},
\newblock in: \bibinfo{booktitle}{International Conference on Learning
  Representations (ICLR)}, \bibinfo{year}{2025}.
  \bibinfo{note}{{arXiv}:2408.01536}.
\bibitem[{Tian et~al.(2026)Tian, Mu, Ding, Li, Ding, and
  Zhao}]{tian2025distortion}
\bibinfo{author}{M.~Tian}, \bibinfo{author}{H.~Mu}, \bibinfo{author}{D.~Ding},
  \bibinfo{author}{M.~Li}, \bibinfo{author}{Y.~Ding},
  \bibinfo{author}{J.~Zhao},
\newblock \bibinfo{title}{Real-time distortion prediction in metallic additive
  manufacturing via a physics-informed deep operator network},
\newblock \bibinfo{journal}{Journal of Industrial Information Integration}
  \bibinfo{volume}{52} (\bibinfo{year}{2026}) \bibinfo{pages}{101126}.
  \DOIprefix\doi{10.1016/j.jii.2026.101126}.
\bibitem[{Tian et~al.(2025)Tian, Mu, Liu, Li, Ding, and
  Zhao}]{tian2025temperature}
\bibinfo{author}{M.~Tian}, \bibinfo{author}{H.~Mu}, \bibinfo{author}{T.~Liu},
  \bibinfo{author}{M.~Li}, \bibinfo{author}{D.~Ding},
  \bibinfo{author}{J.~Zhao},
\newblock \bibinfo{title}{Physics-informed machine learning-based real-time
  long-horizon temperature fields prediction in metallic additive
  manufacturing},
\newblock \bibinfo{journal}{Communications Engineering} \bibinfo{volume}{4}
  (\bibinfo{year}{2025}) \bibinfo{pages}{168}.
  \DOIprefix\doi{10.1038/s44172-025-00501-7}.
\bibitem[{Thiamchaiboonthawee et~al.(2026)Thiamchaiboonthawee, Nehme,
  Telikicherla, Tian, Jayaraman, Chandan, Mariappan, and
  Ahmed}]{thiamchaiboonthawee2026flare}
\bibinfo{author}{K.~Thiamchaiboonthawee}, \bibinfo{author}{G.~Nehme},
  \bibinfo{author}{R.~M. Telikicherla}, \bibinfo{author}{J.~Tian},
  \bibinfo{author}{B.~Jayaraman}, \bibinfo{author}{V.~Chandan},
  \bibinfo{author}{D.~Mariappan}, \bibinfo{author}{F.~Ahmed},
  \bibinfo{title}{{FLARE}: A data-efficient surrogate for predicting
  displacement fields in directed energy deposition}, \bibinfo{year}{2026}.
  \href{http://arxiv.org/abs/2604.16649}{{\tt arXiv:2604.16649}}.
\bibitem[{Giacomini and D{\'i}ez(2026)}]{giacomini2026surrogates}
\bibinfo{author}{M.~Giacomini}, \bibinfo{author}{P.~D{\'i}ez},
\newblock \bibinfo{title}{Surrogates for physics-based and data-driven
  modelling of parametric systems: Review and new perspectives},
\newblock \bibinfo{journal}{Archives of Computational Methods in Engineering}
  \bibinfo{volume}{33} (\bibinfo{year}{2026}) \bibinfo{pages}{8519--8551}.
  \DOIprefix\doi{10.1007/s11831-026-10552-4}.
\bibitem[{Mosquera et~al.(2019)Mosquera, Hamdouni, El~Hamidi, and
  Allery}]{mosquera2019idw}
\bibinfo{author}{R.~Mosquera}, \bibinfo{author}{A.~Hamdouni},
  \bibinfo{author}{A.~El~Hamidi}, \bibinfo{author}{C.~Allery},
\newblock \bibinfo{title}{{POD} basis interpolation via inverse distance
  weighting on {G}rassmann manifolds},
\newblock \bibinfo{journal}{Discrete and Continuous Dynamical Systems - S}
  \bibinfo{volume}{12} (\bibinfo{year}{2019}) \bibinfo{pages}{1743--1759}.
  \DOIprefix\doi{10.3934/dcdss.2019115}.
\bibitem[{Oulghelou and Allery(2021)}]{oulghelou2021bicitsgm}
\bibinfo{author}{M.~Oulghelou}, \bibinfo{author}{C.~Allery},
\newblock \bibinfo{title}{Non-intrusive method for parametric model order
  reduction using a bi-calibrated interpolation on the {G}rassmann manifold},
\newblock \bibinfo{journal}{Journal of Computational Physics}
  \bibinfo{volume}{426} (\bibinfo{year}{2021}) \bibinfo{pages}{109924}.
  \DOIprefix\doi{10.1016/j.jcp.2020.109924}.
\bibitem[{Friderikos et~al.(2022)Friderikos, Baranger, Olive, and
  N{\'e}ron}]{friderikos2022stability}
\bibinfo{author}{O.~Friderikos}, \bibinfo{author}{E.~Baranger},
  \bibinfo{author}{M.~Olive}, \bibinfo{author}{D.~N{\'e}ron},
\newblock \bibinfo{title}{On the stability of {POD} basis interpolation on
  {G}rassmann manifolds for parametric model order reduction},
\newblock \bibinfo{journal}{Computational Mechanics} \bibinfo{volume}{70}
  (\bibinfo{year}{2022}) \bibinfo{pages}{181--204}.
  \DOIprefix\doi{10.1007/s00466-022-02163-0}.
\bibitem[{De~Lathauwer et~al.(2000)De~Lathauwer, De~Moor, and
  Vandewalle}]{DeLathauwer2000a}
\bibinfo{author}{L.~De~Lathauwer}, \bibinfo{author}{B.~De~Moor},
  \bibinfo{author}{J.~Vandewalle},
\newblock \bibinfo{title}{A multilinear singular value decomposition},
\newblock \bibinfo{journal}{SIAM Journal on Matrix Analysis and Applications}
  \bibinfo{volume}{21} (\bibinfo{year}{2000}) \bibinfo{pages}{1253--1278}.
  \DOIprefix\doi{10.1137/S0895479896305696}.
\bibitem[{Kolda and Bader(2009)}]{Kolda2009}
\bibinfo{author}{T.~G. Kolda}, \bibinfo{author}{B.~W. Bader},
\newblock \bibinfo{title}{Tensor decompositions and applications},
\newblock \bibinfo{journal}{SIAM Review} \bibinfo{volume}{51}
  (\bibinfo{year}{2009}) \bibinfo{pages}{455--500}.
  \DOIprefix\doi{10.1137/07070111X}.
\bibitem[{Sch{\"o}nemann(1966)}]{Schoenemann1966}
\bibinfo{author}{P.~H. Sch{\"o}nemann},
\newblock \bibinfo{title}{A generalized solution of the orthogonal {Procrustes}
  problem},
\newblock \bibinfo{journal}{Psychometrika} \bibinfo{volume}{31}
  (\bibinfo{year}{1966}) \bibinfo{pages}{1--10}.
  \DOIprefix\doi{10.1007/BF02289451}.
\bibitem[{Higham(1986)}]{Higham1986}
\bibinfo{author}{N.~J. Higham},
\newblock \bibinfo{title}{Computing the polar decomposition---with
  applications},
\newblock \bibinfo{journal}{SIAM Journal on Scientific and Statistical
  Computing} \bibinfo{volume}{7} (\bibinfo{year}{1986})
  \bibinfo{pages}{1160--1174}. \DOIprefix\doi{10.1137/0907079}.
\bibitem[{Buhmann(2003)}]{Buhmann2003}
\bibinfo{author}{M.~D. Buhmann}, \bibinfo{title}{Radial Basis Functions: Theory
  and Implementations}, \bibinfo{publisher}{Cambridge University Press},
  \bibinfo{year}{2003}. \DOIprefix\doi{10.1017/CBO9780511543241}.
\bibitem[{Rasmussen and Williams(2006)}]{Rasmussen2006}
\bibinfo{author}{C.~E. Rasmussen}, \bibinfo{author}{C.~K.~I. Williams},
  \bibinfo{title}{Gaussian Processes for Machine Learning},
  \bibinfo{publisher}{MIT Press}, \bibinfo{year}{2006}.
  \DOIprefix\doi{10.7551/mitpress/3206.001.0001}.
\bibitem[{Benner et~al.(2015)Benner, Gugercin, and Willcox}]{Benner2015}
\bibinfo{author}{P.~Benner}, \bibinfo{author}{S.~Gugercin},
  \bibinfo{author}{K.~Willcox},
\newblock \bibinfo{title}{A survey of projection-based model reduction methods
  for parametric dynamical systems},
\newblock \bibinfo{journal}{SIAM Review} \bibinfo{volume}{57}
  (\bibinfo{year}{2015}) \bibinfo{pages}{483--531}.
  \DOIprefix\doi{10.1137/130932715}.
\bibitem[{Quarteroni et~al.(2016)Quarteroni, Manzoni, and
  Negri}]{Quarteroni2016}
\bibinfo{author}{A.~Quarteroni}, \bibinfo{author}{A.~Manzoni},
  \bibinfo{author}{F.~Negri}, \bibinfo{title}{Reduced Basis Methods for Partial
  Differential Equations: An Introduction}, volume~\bibinfo{volume}{92} of
  \textit{\bibinfo{series}{UNITEXT}}, \bibinfo{publisher}{Springer},
  \bibinfo{address}{Cham}, \bibinfo{year}{2016}.
  \DOIprefix\doi{10.1007/978-3-319-15431-2}.
\bibitem[{Hesthaven et~al.(2016)Hesthaven, Rozza, and Stamm}]{Hesthaven2016}
\bibinfo{author}{J.~S. Hesthaven}, \bibinfo{author}{G.~Rozza},
  \bibinfo{author}{B.~Stamm}, \bibinfo{title}{Certified Reduced Basis Methods
  for Parametrized Partial Differential Equations}, SpringerBriefs in
  Mathematics, \bibinfo{publisher}{Springer}, \bibinfo{year}{2016}.
  \DOIprefix\doi{10.1007/978-3-319-22470-1}.
\bibitem[{Benner et~al.(2017)Benner, Ohlberger, Cohen, and
  Willcox}]{Benner2017Book}
\bibinfo{editor}{P.~Benner}, \bibinfo{editor}{M.~Ohlberger},
  \bibinfo{editor}{A.~Cohen}, \bibinfo{editor}{K.~Willcox} (Eds.),
  \bibinfo{title}{Model Reduction and Approximation: Theory and Algorithms},
  \bibinfo{publisher}{SIAM}, \bibinfo{address}{Philadelphia},
  \bibinfo{year}{2017}. \DOIprefix\doi{10.1137/1.9781611974829}.
\bibitem[{Sirovich(1987)}]{Sirovich1987}
\bibinfo{author}{L.~Sirovich},
\newblock \bibinfo{title}{Turbulence and the dynamics of coherent structures.
  part {I}: Coherent structures},
\newblock \bibinfo{journal}{Quarterly of Applied Mathematics}
  \bibinfo{volume}{45} (\bibinfo{year}{1987}) \bibinfo{pages}{561--571}.
  \DOIprefix\doi{10.1090/qam/910462}.
\bibitem[{Berkooz et~al.(1993)Berkooz, Holmes, and Lumley}]{Berkooz1993}
\bibinfo{author}{G.~Berkooz}, \bibinfo{author}{P.~Holmes},
  \bibinfo{author}{J.~L. Lumley},
\newblock \bibinfo{title}{The proper orthogonal decomposition in the analysis
  of turbulent flows},
\newblock \bibinfo{journal}{Annual Review of Fluid Mechanics}
  \bibinfo{volume}{25} (\bibinfo{year}{1993}) \bibinfo{pages}{539--575}.
  \DOIprefix\doi{10.1146/annurev.fl.25.010193.002543}.
\bibitem[{Willcox and Peraire(2002)}]{Willcox2002}
\bibinfo{author}{K.~Willcox}, \bibinfo{author}{J.~Peraire},
\newblock \bibinfo{title}{Balanced model reduction via the proper orthogonal
  decomposition},
\newblock \bibinfo{journal}{AIAA Journal} \bibinfo{volume}{40}
  (\bibinfo{year}{2002}) \bibinfo{pages}{2323--2330}.
  \DOIprefix\doi{10.2514/2.1570}.
\bibitem[{Rowley(2005)}]{Rowley2004}
\bibinfo{author}{C.~W. Rowley},
\newblock \bibinfo{title}{Model reduction for fluids using balanced proper
  orthogonal decomposition},
\newblock \bibinfo{journal}{International Journal of Bifurcation and Chaos in
  Applied Sciences and Engineering} \bibinfo{volume}{15} (\bibinfo{year}{2005})
  \bibinfo{pages}{997--1013}. \DOIprefix\doi{10.1142/S0218127405012429}.
\bibitem[{Amsallem et~al.(2012)Amsallem, Zahr, and
  Farhat}]{Amsallem2012LocalROM}
\bibinfo{author}{D.~Amsallem}, \bibinfo{author}{M.~J. Zahr},
  \bibinfo{author}{C.~Farhat},
\newblock \bibinfo{title}{Nonlinear model order reduction based on local
  reduced-order bases},
\newblock \bibinfo{journal}{International Journal for Numerical Methods in
  Engineering} \bibinfo{volume}{92} (\bibinfo{year}{2012})
  \bibinfo{pages}{891--916}. \DOIprefix\doi{10.1002/nme.4371}.
\bibitem[{Edelman et~al.(1998)Edelman, Arias, and Smith}]{Edelman1998}
\bibinfo{author}{A.~Edelman}, \bibinfo{author}{T.~A. Arias},
  \bibinfo{author}{S.~T. Smith},
\newblock \bibinfo{title}{The geometry of algorithms with orthogonality
  constraints},
\newblock \bibinfo{journal}{SIAM Journal on Matrix Analysis and Applications}
  \bibinfo{volume}{20} (\bibinfo{year}{1998}) \bibinfo{pages}{303--353}.
  \DOIprefix\doi{10.1137/S0895479895290954}.
\bibitem[{Absil et~al.(2008)Absil, Mahony, and Sepulchre}]{Absil2008}
\bibinfo{author}{P.-A. Absil}, \bibinfo{author}{R.~Mahony},
  \bibinfo{author}{R.~Sepulchre}, \bibinfo{title}{Optimization Algorithms on
  Matrix Manifolds}, \bibinfo{publisher}{Princeton University Press},
  \bibinfo{address}{Princeton, NJ}, \bibinfo{year}{2008}.
  \DOIprefix\doi{10.1515/9781400830244}.
\bibitem[{Kobayashi and Nomizu(1963)}]{Kobayashi1963}
\bibinfo{author}{S.~Kobayashi}, \bibinfo{author}{K.~Nomizu},
  \bibinfo{title}{Foundations of Differential Geometry},
  volume~\bibinfo{volume}{1}, \bibinfo{publisher}{Interscience},
  \bibinfo{year}{1963}.
\bibitem[{Amsallem and Farhat(2008)}]{Amsallem2008}
\bibinfo{author}{D.~Amsallem}, \bibinfo{author}{C.~Farhat},
\newblock \bibinfo{title}{Interpolation method for adapting reduced-order
  models and application to aeroelasticity},
\newblock \bibinfo{journal}{AIAA Journal} \bibinfo{volume}{46}
  (\bibinfo{year}{2008}) \bibinfo{pages}{1803--1813}.
  \DOIprefix\doi{10.2514/1.35374}.
\bibitem[{Son(2013)}]{Son2013}
\bibinfo{author}{N.~T. Son},
\newblock \bibinfo{title}{A real time procedure for affinely dependent
  parametric model order reduction using interpolation on {G}rassmann
  manifolds},
\newblock \bibinfo{journal}{International Journal for Numerical Methods in
  Engineering} \bibinfo{volume}{93} (\bibinfo{year}{2013})
  \bibinfo{pages}{818--833}. \DOIprefix\doi{10.1002/nme.4408}.
\bibitem[{Amsallem et~al.(2009)Amsallem, Cortial, Carlberg, and
  Farhat}]{amsallem2009interpolating}
\bibinfo{author}{D.~Amsallem}, \bibinfo{author}{J.~Cortial},
  \bibinfo{author}{K.~Carlberg}, \bibinfo{author}{C.~Farhat},
\newblock \bibinfo{title}{A method for interpolating on manifolds structural
  dynamics reduced-order models},
\newblock \bibinfo{journal}{International Journal for Numerical Methods in
  Engineering} \bibinfo{volume}{80} (\bibinfo{year}{2009})
  \bibinfo{pages}{1241--1258}. \DOIprefix\doi{10.1002/nme.2681}.
\bibitem[{Amsallem and Farhat(2011)}]{Amsallem2011}
\bibinfo{author}{D.~Amsallem}, \bibinfo{author}{C.~Farhat},
\newblock \bibinfo{title}{An online method for interpolating linear parametric
  reduced-order models},
\newblock \bibinfo{journal}{SIAM Journal on Scientific Computing}
  \bibinfo{volume}{33} (\bibinfo{year}{2011}) \bibinfo{pages}{2169--2198}.
  \DOIprefix\doi{10.1137/100813051}.
\bibitem[{Bendokat et~al.(2024)Bendokat, Zimmermann, and Absil}]{Bendokat2020}
\bibinfo{author}{T.~Bendokat}, \bibinfo{author}{R.~Zimmermann},
  \bibinfo{author}{P.-A. Absil},
\newblock \bibinfo{title}{A {G}rassmann manifold handbook: Basic geometry and
  computational aspects},
\newblock \bibinfo{journal}{Advances in Computational Mathematics}
  \bibinfo{volume}{50} (\bibinfo{year}{2024}) \bibinfo{pages}{6}.
  \DOIprefix\doi{10.1007/s10444-023-10090-8}.
\bibitem[{Zimmermann(2021)}]{zimmermann2021manifold}
\bibinfo{author}{R.~Zimmermann},
\newblock \bibinfo{title}{Manifold interpolation},
\newblock in: \bibinfo{editor}{P.~Benner}, \bibinfo{editor}{S.~Grivet-Talocia},
  \bibinfo{editor}{A.~Quarteroni}, \bibinfo{editor}{G.~Rozza},
  \bibinfo{editor}{W.~Schilders}, \bibinfo{editor}{L.~M. Silveira} (Eds.),
  \bibinfo{booktitle}{Model Order Reduction, Volume 1: System- and Data-Driven
  Methods and Algorithms}, \bibinfo{publisher}{De Gruyter},
  \bibinfo{address}{Berlin, Boston}, \bibinfo{year}{2021}, pp.
  \bibinfo{pages}{229--274}. \DOIprefix\doi{10.1515/9783110498967-007}.
\bibitem[{Zhang et~al.(2022)Zhang, Mak, and Dunson}]{Zhang2022GPS}
\bibinfo{author}{R.~Zhang}, \bibinfo{author}{S.~Mak},
  \bibinfo{author}{D.~Dunson},
\newblock \bibinfo{title}{Gaussian process subspace prediction for model
  reduction},
\newblock \bibinfo{journal}{SIAM Journal on Scientific Computing}
  \bibinfo{volume}{44} (\bibinfo{year}{2022}) \bibinfo{pages}{A1428--A1449}.
  \DOIprefix\doi{10.1137/21M1432739}.
\bibitem[{Ciaramella et~al.(2026)Ciaramella, Gander, and
  Vanzan}]{ciaramella2026grassmann}
\bibinfo{author}{G.~Ciaramella}, \bibinfo{author}{M.~J. Gander},
  \bibinfo{author}{T.~Vanzan},
\newblock \bibinfo{title}{A gentle introduction to interpolation on the
  {G}rassmann manifold},
\newblock \bibinfo{journal}{SIAM Review} \bibinfo{volume}{68}
  (\bibinfo{year}{2026}) \bibinfo{pages}{172--203}.
  \DOIprefix\doi{10.1137/24M1628591}.
\bibitem[{Sreekumar et~al.(2026)Sreekumar, Komis, Barman, Astolfi, and
  Chronopoulos}]{sreekumar2026roadmap}
\bibinfo{author}{A.~Sreekumar}, \bibinfo{author}{G.~Komis},
  \bibinfo{author}{S.~K. Barman}, \bibinfo{author}{A.~Astolfi},
  \bibinfo{author}{D.~Chronopoulos},
\newblock \bibinfo{title}{From snapshots to manifolds: a practical roadmap for
  interpolatory reduced-order modelling},
\newblock \bibinfo{journal}{Proceedings of the Royal Society A}
  \bibinfo{volume}{482} (\bibinfo{year}{2026}) \bibinfo{pages}{20251004}.
  \DOIprefix\doi{10.1098/rspa.2025.1004}.
\bibitem[{Sato and Schmidt(2025)}]{sato2025parametric}
\bibinfo{author}{S.~Sato}, \bibinfo{author}{O.~T. Schmidt},
\newblock \bibinfo{title}{Parametric reduced-order modelling and mode
  sensitivity of actuated cylinder flow from a matrix manifold perspective},
\newblock \bibinfo{journal}{Journal of Fluid Mechanics} \bibinfo{volume}{1021}
  (\bibinfo{year}{2025}) \bibinfo{pages}{A44}.
  \DOIprefix\doi{10.1017/jfm.2025.10733}.
\bibitem[{Mosquera et~al.(2021)Mosquera, El~Hamidi, Hamdouni, and
  Falaize}]{mosquera2021neville}
\bibinfo{author}{R.~Mosquera}, \bibinfo{author}{A.~El~Hamidi},
  \bibinfo{author}{A.~Hamdouni}, \bibinfo{author}{A.~Falaize},
\newblock \bibinfo{title}{Generalization of the {N}eville--{A}itken
  interpolation algorithm on {G}rassmann manifolds: Applications to reduced
  order model},
\newblock \bibinfo{journal}{International Journal for Numerical Methods in
  Fluids} \bibinfo{volume}{93} (\bibinfo{year}{2021})
  \bibinfo{pages}{2421--2442}. \DOIprefix\doi{10.1002/fld.4981}.
\bibitem[{Niu et~al.(2026)Niu, Jiang, Fei, Xiong, and Li}]{niu2026stable}
\bibinfo{author}{Q.~Niu}, \bibinfo{author}{W.~Jiang}, \bibinfo{author}{J.~Fei},
  \bibinfo{author}{R.~Xiong}, \bibinfo{author}{Y.~Li}, \bibinfo{title}{Stable
  high-order interpolation on the {G}rassmann manifold by maximum-volume
  coordinates and {A}rnoldi orthogonalization}, \bibinfo{year}{2026}.
  \href{http://arxiv.org/abs/2603.23858}{{\tt arXiv:2603.23858}}.
\bibitem[{Jensen and Zimmermann(2025)}]{jensen2025maxvol}
\bibinfo{author}{R.~Jensen}, \bibinfo{author}{R.~Zimmermann},
  \bibinfo{title}{Maximum volume coordinates for {G}rassmann interpolation:
  {L}agrange, {H}ermite, and errors}, \bibinfo{year}{2025}.
  \href{http://arxiv.org/abs/2506.01574}{{\tt arXiv:2506.01574}}.
\bibitem[{Oulghelou et~al.(2021)Oulghelou, Allery, and
  Mosquera}]{oulghelou2021barycentric}
\bibinfo{author}{M.~Oulghelou}, \bibinfo{author}{C.~Allery},
  \bibinfo{author}{R.~Mosquera},
\newblock \bibinfo{title}{Parametric reduced order models based on a
  {R}iemannian barycentric interpolation},
\newblock \bibinfo{journal}{International Journal for Numerical Methods in
  Engineering} \bibinfo{volume}{122} (\bibinfo{year}{2021})
  \bibinfo{pages}{6623--6640}. \DOIprefix\doi{10.1002/nme.6805}.
\bibitem[{El~Omari et~al.(2025)El~Omari, El~Khlifi, and
  Cordier}]{elomari2025stiefel}
\bibinfo{author}{A.~El~Omari}, \bibinfo{author}{M.~El~Khlifi},
  \bibinfo{author}{L.~Cordier},
\newblock \bibinfo{title}{Stiefel manifold interpolation for non-intrusive
  model reduction of parameterized fluid flow problems},
\newblock \bibinfo{journal}{Journal of Computational Physics}
  \bibinfo{volume}{521} (\bibinfo{year}{2025}) \bibinfo{pages}{113564}.
  \DOIprefix\doi{10.1016/j.jcp.2024.113564}.
\bibitem[{Mosquera et~al.(2024)Mosquera, Falaize, and
  Hamdouni}]{mosquera2024kriging}
\bibinfo{author}{R.~Mosquera}, \bibinfo{author}{A.~Falaize},
  \bibinfo{author}{A.~Hamdouni},
\newblock \bibinfo{title}{Grassmannian kriging with applications in {POD}-based
  model order reduction},
\newblock \bibinfo{journal}{Discrete and Continuous Dynamical Systems - S}
  \bibinfo{volume}{17} (\bibinfo{year}{2024}) \bibinfo{pages}{2400--2419}.
  \DOIprefix\doi{10.3934/dcdss.2024017}.
\bibitem[{Giovanis and Shields(2020)}]{Giovanis2020GrassmannGP}
\bibinfo{author}{D.~G. Giovanis}, \bibinfo{author}{M.~D. Shields},
\newblock \bibinfo{title}{Data-driven surrogates for high dimensional models
  using {G}aussian process regression on the {G}rassmann manifold},
\newblock \bibinfo{journal}{Computer Methods in Applied Mechanics and
  Engineering} \bibinfo{volume}{370} (\bibinfo{year}{2020})
  \bibinfo{pages}{113269}. \DOIprefix\doi{10.1016/j.cma.2020.113269}.
\bibitem[{Liu and Liu(2023)}]{liu2023regression}
\bibinfo{author}{X.~Liu}, \bibinfo{author}{X.~Liu},
\newblock \bibinfo{title}{Regression trees on {G}rassmann manifold for adapting
  reduced-order models},
\newblock \bibinfo{journal}{AIAA Journal} \bibinfo{volume}{61}
  (\bibinfo{year}{2023}) \bibinfo{pages}{1318--1333}.
  \DOIprefix\doi{10.2514/1.J062180}.
\bibitem[{Liu et~al.(2026)Liu, Feng, and Liu}]{liu2026projected}
\bibinfo{author}{X.~Liu}, \bibinfo{author}{J.~Feng}, \bibinfo{author}{X.~Liu},
\newblock \bibinfo{title}{Adapting projection-based reduced-order models using
  projected {G}aussian process},
\newblock \bibinfo{journal}{SIAM Journal on Scientific Computing}
  \bibinfo{volume}{48} (\bibinfo{year}{2026}) \bibinfo{pages}{C941--C966}.
  \DOIprefix\doi{10.1137/24M1697827}.
\bibitem[{Giovanis and Shields(2018)}]{giovanis2018uncertainty}
\bibinfo{author}{D.~G. Giovanis}, \bibinfo{author}{M.~D. Shields},
\newblock \bibinfo{title}{Uncertainty quantification for complex systems with
  very high dimensional response using {G}rassmann manifold variations},
\newblock \bibinfo{journal}{Journal of Computational Physics}
  \bibinfo{volume}{364} (\bibinfo{year}{2018}) \bibinfo{pages}{393--415}.
  \DOIprefix\doi{10.1016/j.jcp.2018.03.009}.
\bibitem[{Du et~al.(2026)Du, Zhang, Zhang, and Zhang}]{du2026dmd}
\bibinfo{author}{L.~Du}, \bibinfo{author}{S.~Zhang},
  \bibinfo{author}{R.~Zhang}, \bibinfo{author}{S.~Zhang},
\newblock \bibinfo{title}{Interpolation-based parametric reduced-order models
  with dynamic mode decomposition},
\newblock \bibinfo{journal}{Journal of Computational Physics}
  \bibinfo{volume}{544} (\bibinfo{year}{2026}) \bibinfo{pages}{114436}.
  \DOIprefix\doi{10.1016/j.jcp.2025.114436}.
\bibitem[{Peherstorfer and Willcox(2016)}]{peherstorfer2016opinf}
\bibinfo{author}{B.~Peherstorfer}, \bibinfo{author}{K.~Willcox},
\newblock \bibinfo{title}{Data-driven operator inference for nonintrusive
  projection-based model reduction},
\newblock \bibinfo{journal}{Computer Methods in Applied Mechanics and
  Engineering} \bibinfo{volume}{306} (\bibinfo{year}{2016})
  \bibinfo{pages}{196--215}. \DOIprefix\doi{10.1016/j.cma.2016.03.025}.
\bibitem[{Li et~al.(2021)Li, Kovachki, Azizzadenesheli, Liu, Bhattacharya,
  Stuart, and Anandkumar}]{li2021fourier}
\bibinfo{author}{Z.~Li}, \bibinfo{author}{N.~Kovachki},
  \bibinfo{author}{K.~Azizzadenesheli}, \bibinfo{author}{B.~Liu},
  \bibinfo{author}{K.~Bhattacharya}, \bibinfo{author}{A.~Stuart},
  \bibinfo{author}{A.~Anandkumar},
\newblock \bibinfo{title}{Fourier neural operator for parametric partial
  differential equations},
\newblock in: \bibinfo{booktitle}{International Conference on Learning
  Representations (ICLR)}, \bibinfo{year}{2021}.
  \bibinfo{note}{{arXiv}:2010.08895}.
\bibitem[{Lu et~al.(2021)Lu, Jin, Pang, Zhang, and
  Karniadakis}]{lu2021deeponet}
\bibinfo{author}{L.~Lu}, \bibinfo{author}{P.~Jin}, \bibinfo{author}{G.~Pang},
  \bibinfo{author}{Z.~Zhang}, \bibinfo{author}{G.~E. Karniadakis},
\newblock \bibinfo{title}{Learning nonlinear operators via {DeepONet} based on
  the universal approximation theorem of operators},
\newblock \bibinfo{journal}{Nature Machine Intelligence} \bibinfo{volume}{3}
  (\bibinfo{year}{2021}) \bibinfo{pages}{218--229}.
  \DOIprefix\doi{10.1038/s42256-021-00302-5}.
\bibitem[{Li et~al.(2023)Li, Shu, and Barati~Farimani}]{li2023scalable}
\bibinfo{author}{Z.~Li}, \bibinfo{author}{D.~Shu},
  \bibinfo{author}{A.~Barati~Farimani},
\newblock \bibinfo{title}{Scalable transformer for {PDE} surrogate modeling},
\newblock in: \bibinfo{booktitle}{Advances in Neural Information Processing
  Systems (NeurIPS)}, volume~\bibinfo{volume}{36}, \bibinfo{year}{2023}, pp.
  \bibinfo{pages}{28010--28039}. \DOIprefix\doi{10.52202/075280-1216}.
\bibitem[{Wu et~al.(2024)Wu, Luo, Wang, Wang, and Long}]{wu2024transolver}
\bibinfo{author}{H.~Wu}, \bibinfo{author}{H.~Luo}, \bibinfo{author}{H.~Wang},
  \bibinfo{author}{J.~Wang}, \bibinfo{author}{M.~Long},
\newblock \bibinfo{title}{Transolver: A fast transformer solver for {PDE}s on
  general geometries},
\newblock in: \bibinfo{booktitle}{Proceedings of the 41st International
  Conference on Machine Learning}, volume \bibinfo{volume}{235} of
  \textit{\bibinfo{series}{Proceedings of Machine Learning Research}},
  \bibinfo{year}{2024}, pp. \bibinfo{pages}{53681--53705}.
  \bibinfo{note}{{arXiv}:2402.02366}.
\bibitem[{Luo et~al.(2025)Luo, Wu, Zhou, Xing, Di, Wang, and
  Long}]{luo2025transolverpp}
\bibinfo{author}{H.~Luo}, \bibinfo{author}{H.~Wu}, \bibinfo{author}{H.~Zhou},
  \bibinfo{author}{L.~Xing}, \bibinfo{author}{Y.~Di},
  \bibinfo{author}{J.~Wang}, \bibinfo{author}{M.~Long},
\newblock \bibinfo{title}{Transolver++: An accurate neural solver for {PDE}s on
  million-scale geometries},
\newblock in: \bibinfo{booktitle}{Proceedings of the 42nd International
  Conference on Machine Learning}, volume \bibinfo{volume}{267} of
  \textit{\bibinfo{series}{Proceedings of Machine Learning Research}},
  \bibinfo{year}{2025}, pp. \bibinfo{pages}{41432--41449}.
\bibitem[{Wen et~al.(2025)Wen, Kumbhat, Lingsch, Mousavi, Zhao, Chandrashekar,
  and Mishra}]{wen2025gaot}
\bibinfo{author}{S.~Wen}, \bibinfo{author}{A.~Kumbhat},
  \bibinfo{author}{L.~Lingsch}, \bibinfo{author}{S.~Mousavi},
  \bibinfo{author}{Y.~Zhao}, \bibinfo{author}{P.~Chandrashekar},
  \bibinfo{author}{S.~Mishra},
\newblock \bibinfo{title}{Geometry aware operator transformer as an efficient
  and accurate neural surrogate for {PDE}s on arbitrary domains},
\newblock in: \bibinfo{booktitle}{Advances in Neural Information Processing
  Systems (NeurIPS)}, volume~\bibinfo{volume}{38}, \bibinfo{year}{2025}, pp.
  \bibinfo{pages}{155423--155501}. \bibinfo{note}{{arXiv}:2505.18781}.
\bibitem[{Fresca and Manzoni(2022)}]{fresca2022poddlrom}
\bibinfo{author}{S.~Fresca}, \bibinfo{author}{A.~Manzoni},
\newblock \bibinfo{title}{{POD-DL-ROM}: Enhancing deep learning-based reduced
  order models for nonlinear parametrized {PDE}s by proper orthogonal
  decomposition},
\newblock \bibinfo{journal}{Computer Methods in Applied Mechanics and
  Engineering} \bibinfo{volume}{388} (\bibinfo{year}{2022})
  \bibinfo{pages}{114181}. \DOIprefix\doi{10.1016/j.cma.2021.114181}.
\bibitem[{De~Lathauwer et~al.(2000)De~Lathauwer, De~Moor, and
  Vandewalle}]{DeLathauwer2000b}
\bibinfo{author}{L.~De~Lathauwer}, \bibinfo{author}{B.~De~Moor},
  \bibinfo{author}{J.~Vandewalle},
\newblock \bibinfo{title}{On the best rank-1 and rank-{$(R_1, R_2, \ldots,
  R_N)$} approximation of higher-order tensors},
\newblock \bibinfo{journal}{SIAM Journal on Matrix Analysis and Applications}
  \bibinfo{volume}{21} (\bibinfo{year}{2000}) \bibinfo{pages}{1324--1342}.
  \DOIprefix\doi{10.1137/S0895479898346995}.
\bibitem[{Vannieuwenhoven et~al.(2012)Vannieuwenhoven, Vandebril, and
  Meerbergen}]{Vannieuwenhoven2012}
\bibinfo{author}{N.~Vannieuwenhoven}, \bibinfo{author}{R.~Vandebril},
  \bibinfo{author}{K.~Meerbergen},
\newblock \bibinfo{title}{A new truncation strategy for the higher-order
  singular value decomposition},
\newblock \bibinfo{journal}{SIAM Journal on Scientific Computing}
  \bibinfo{volume}{34} (\bibinfo{year}{2012}) \bibinfo{pages}{A1027--A1052}.
  \DOIprefix\doi{10.1137/110836067}.
\bibitem[{Oseledets(2011)}]{Oseledets2011}
\bibinfo{author}{I.~V. Oseledets},
\newblock \bibinfo{title}{Tensor-train decomposition},
\newblock \bibinfo{journal}{SIAM Journal on Scientific Computing}
  \bibinfo{volume}{33} (\bibinfo{year}{2011}) \bibinfo{pages}{2295--2317}.
  \DOIprefix\doi{10.1137/090752286}.
\bibitem[{Hackbusch and K{\"u}hn(2009)}]{Hackbusch2009}
\bibinfo{author}{W.~Hackbusch}, \bibinfo{author}{S.~K{\"u}hn},
\newblock \bibinfo{title}{A new scheme for the tensor representation},
\newblock \bibinfo{journal}{Journal of Fourier Analysis and Applications}
  \bibinfo{volume}{15} (\bibinfo{year}{2009}) \bibinfo{pages}{706--722}.
  \DOIprefix\doi{10.1007/s00041-009-9094-9}.
\bibitem[{Grasedyck(2010)}]{Grasedyck2010}
\bibinfo{author}{L.~Grasedyck},
\newblock \bibinfo{title}{Hierarchical singular value decomposition of
  tensors},
\newblock \bibinfo{journal}{SIAM Journal on Matrix Analysis and Applications}
  \bibinfo{volume}{31} (\bibinfo{year}{2010}) \bibinfo{pages}{2029--2054}.
  \DOIprefix\doi{10.1137/090764189}.
\bibitem[{Nouy(2017)}]{Nouy2017}
\bibinfo{author}{A.~Nouy},
\newblock \bibinfo{title}{Low-rank methods for high-dimensional approximation
  and model order reduction},
\newblock in: \bibinfo{editor}{P.~Benner}, \bibinfo{editor}{M.~Ohlberger},
  \bibinfo{editor}{A.~Cohen}, \bibinfo{editor}{K.~Willcox} (Eds.),
  \bibinfo{booktitle}{Model Reduction and Approximation: Theory and
  Algorithms}, \bibinfo{publisher}{SIAM}, \bibinfo{address}{Philadelphia},
  \bibinfo{year}{2017}, pp. \bibinfo{pages}{171--226}.
  \DOIprefix\doi{10.1137/1.9781611974829.ch4}.
\bibitem[{Kressner and Tobler(2011)}]{kressner2011lowrank}
\bibinfo{author}{D.~Kressner}, \bibinfo{author}{C.~Tobler},
\newblock \bibinfo{title}{Low-rank tensor {K}rylov subspace methods for
  parametrized linear systems},
\newblock \bibinfo{journal}{SIAM Journal on Matrix Analysis and Applications}
  \bibinfo{volume}{32} (\bibinfo{year}{2011}) \bibinfo{pages}{1288--1316}.
  \DOIprefix\doi{10.1137/100799010}.
\bibitem[{Ballani and Grasedyck(2015)}]{ballani2015hierarchical}
\bibinfo{author}{J.~Ballani}, \bibinfo{author}{L.~Grasedyck},
\newblock \bibinfo{title}{Hierarchical tensor approximation of output
  quantities of parameter-dependent {PDE}s},
\newblock \bibinfo{journal}{SIAM/ASA Journal on Uncertainty Quantification}
  \bibinfo{volume}{3} (\bibinfo{year}{2015}) \bibinfo{pages}{852--872}.
  \DOIprefix\doi{10.1137/140960980}.
\bibitem[{Mamonov and Olshanskii(2022)}]{Mamonov2022}
\bibinfo{author}{A.~V. Mamonov}, \bibinfo{author}{M.~A. Olshanskii},
\newblock \bibinfo{title}{Interpolatory tensorial reduced order models for
  parametric dynamical systems},
\newblock \bibinfo{journal}{Computer Methods in Applied Mechanics and
  Engineering} \bibinfo{volume}{397} (\bibinfo{year}{2022})
  \bibinfo{pages}{115122}. \DOIprefix\doi{10.1016/j.cma.2022.115122}.
\bibitem[{Mamonov and Olshanskii(2024)}]{Mamonov2024}
\bibinfo{author}{A.~V. Mamonov}, \bibinfo{author}{M.~A. Olshanskii},
\newblock \bibinfo{title}{Tensorial parametric model order reduction of
  nonlinear dynamical systems},
\newblock \bibinfo{journal}{SIAM Journal on Scientific Computing}
  \bibinfo{volume}{46} (\bibinfo{year}{2024}) \bibinfo{pages}{A1850--A1878}.
  \DOIprefix\doi{10.1137/23M1553789}.
\bibitem[{Olshanskii and Rebholz(2025)}]{Olshanskii2024}
\bibinfo{author}{M.~A. Olshanskii}, \bibinfo{author}{L.~G. Rebholz},
\newblock \bibinfo{title}{Approximating a branch of solutions to the
  {Navier--Stokes} equations by reduced-order modeling},
\newblock \bibinfo{journal}{Journal of Computational Physics}
  \bibinfo{volume}{524} (\bibinfo{year}{2025}) \bibinfo{pages}{113728}.
  \DOIprefix\doi{10.1016/j.jcp.2025.113728}.
\bibitem[{Islam et~al.(2026)Islam, Mizan, Olshanskii, and
  Mang}]{islam2026tensorial}
\bibinfo{author}{A.~Islam}, \bibinfo{author}{M.~R.~B. Mizan},
  \bibinfo{author}{M.~Olshanskii}, \bibinfo{author}{A.~Mang},
  \bibinfo{title}{Tensorial reduced-order models for parametric coupled
  reaction-diffusion systems: Application to brain tumor growth modeling},
  \bibinfo{year}{2026}. \href{http://arxiv.org/abs/2603.14101}{{\tt
  arXiv:2603.14101}}.
\bibitem[{Budzinskiy et~al.(2025)Budzinskiy, Kazeev, and
  Olshanskii}]{budzinskiy2025cross}
\bibinfo{author}{S.~Budzinskiy}, \bibinfo{author}{V.~Kazeev},
  \bibinfo{author}{M.~Olshanskii}, \bibinfo{title}{Low-rank cross approximation
  of function-valued tensors for reduced-order modeling of parametric {PDE}s},
  \bibinfo{year}{2025}. \href{http://arxiv.org/abs/2511.22650}{{\tt
  arXiv:2511.22650}}.
\bibitem[{Choi and Carlberg(2019)}]{choi2019spacetime}
\bibinfo{author}{Y.~Choi}, \bibinfo{author}{K.~Carlberg},
\newblock \bibinfo{title}{Space--time least-squares {P}etrov--{G}alerkin
  projection for nonlinear model reduction},
\newblock \bibinfo{journal}{SIAM Journal on Scientific Computing}
  \bibinfo{volume}{41} (\bibinfo{year}{2019}) \bibinfo{pages}{A26--A58}.
  \DOIprefix\doi{10.1137/17M1120531}.
\bibitem[{Gopalan and Wikle(2022)}]{gopalan2022hosvd}
\bibinfo{author}{G.~Gopalan}, \bibinfo{author}{C.~K. Wikle},
\newblock \bibinfo{title}{A higher-order singular value decomposition tensor
  emulator for spatiotemporal simulators},
\newblock \bibinfo{journal}{Journal of Agricultural, Biological and
  Environmental Statistics} \bibinfo{volume}{27} (\bibinfo{year}{2022})
  \bibinfo{pages}{22--45}. \DOIprefix\doi{10.1007/s13253-021-00459-x}.
\bibitem[{Karas{\"o}zen et~al.(2022)Karas{\"o}zen, M{\"u}layim, and
  Uzunca}]{karasozen2022nonintrusive}
\bibinfo{author}{B.~Karas{\"o}zen}, \bibinfo{author}{G.~M{\"u}layim},
  \bibinfo{author}{M.~Uzunca},
\newblock \bibinfo{title}{Nonintrusive model order reduction for
  cross-diffusion systems},
\newblock \bibinfo{journal}{Communications in Nonlinear Science and Numerical
  Simulation} \bibinfo{volume}{115} (\bibinfo{year}{2022})
  \bibinfo{pages}{106734}. \DOIprefix\doi{10.1016/j.cnsns.2022.106734}.
\bibitem[{Barrag{\'a}n et~al.(2025)Barrag{\'a}n, Hetherington, Sengupta,
  Abad{\'i}a-Heredia, Garicano-Mena, and Le~Clainche}]{barragan2025motif}
\bibinfo{author}{G.~Barrag{\'a}n}, \bibinfo{author}{A.~Hetherington},
  \bibinfo{author}{A.~Sengupta}, \bibinfo{author}{R.~Abad{\'i}a-Heredia},
  \bibinfo{author}{J.~Garicano-Mena}, \bibinfo{author}{S.~Le~Clainche},
  \bibinfo{title}{{MoTIF}: A mode-structured tensor framework for
  multi-parametric approximation, super-resolution and forecasting of unsteady
  systems}, \bibinfo{year}{2025}. \href{http://arxiv.org/abs/2510.25625}{{\tt
  arXiv:2510.25625}}.
\bibitem[{Sasagawa and Tanaka(2023)}]{sasagawa2023tucker}
\bibinfo{author}{T.~Sasagawa}, \bibinfo{author}{M.~Tanaka},
\newblock \bibinfo{title}{Construction of a reduced-order model based on tensor
  decomposition and its application to airbag deployment simulations},
\newblock \bibinfo{journal}{Scientific Reports} \bibinfo{volume}{13}
  (\bibinfo{year}{2023}) \bibinfo{pages}{11241}.
  \DOIprefix\doi{10.1038/s41598-023-38393-2}.
\bibitem[{Shaa et~al.(2026)Shaa, Guet, Yang, Albergel, and
  Ribstein}]{shaa2026hybrid}
\bibinfo{author}{A.~Shaa}, \bibinfo{author}{C.~Guet},
  \bibinfo{author}{X.~Yang}, \bibinfo{author}{A.~Albergel},
  \bibinfo{author}{B.~Ribstein},
\newblock \bibinfo{title}{Hybrid neural interpolation of a sequence of wind
  flows},
\newblock \bibinfo{journal}{Engineering Applications of Computational Fluid
  Mechanics} \bibinfo{volume}{20} (\bibinfo{year}{2026})
  \bibinfo{pages}{2682648}. \DOIprefix\doi{10.1080/19942060.2026.2682648}.
\bibitem[{Eld{\'e}n and Savas(2009)}]{elden2009newton}
\bibinfo{author}{L.~Eld{\'e}n}, \bibinfo{author}{B.~Savas},
\newblock \bibinfo{title}{A {N}ewton--{G}rassmann method for computing the best
  multilinear rank-$(r_1,r_2,r_3)$ approximation of a tensor},
\newblock \bibinfo{journal}{SIAM Journal on Matrix Analysis and Applications}
  \bibinfo{volume}{31} (\bibinfo{year}{2009}) \bibinfo{pages}{248--271}.
  \DOIprefix\doi{10.1137/070688316}.
\bibitem[{Ishteva et~al.(2011)Ishteva, Absil, Van~Huffel, and
  De~Lathauwer}]{ishteva2011best}
\bibinfo{author}{M.~Ishteva}, \bibinfo{author}{P.-A. Absil},
  \bibinfo{author}{S.~Van~Huffel}, \bibinfo{author}{L.~De~Lathauwer},
\newblock \bibinfo{title}{Best low multilinear rank approximation of
  higher-order tensors, based on the {R}iemannian trust-region scheme},
\newblock \bibinfo{journal}{SIAM Journal on Matrix Analysis and Applications}
  \bibinfo{volume}{32} (\bibinfo{year}{2011}) \bibinfo{pages}{115--135}.
  \DOIprefix\doi{10.1137/090764827}.
\bibitem[{De~Sterck and Howse(2016)}]{desterck2016tucker}
\bibinfo{author}{H.~De~Sterck}, \bibinfo{author}{A.~J.~M. Howse},
\newblock \bibinfo{title}{Nonlinearly preconditioned optimization on
  {G}rassmann manifolds for computing approximate {T}ucker tensor
  decompositions},
\newblock \bibinfo{journal}{SIAM Journal on Scientific Computing}
  \bibinfo{volume}{38} (\bibinfo{year}{2016}) \bibinfo{pages}{A997--A1018}.
  \DOIprefix\doi{10.1137/15M1037288}.
\bibitem[{Lui(2012)}]{lui2012product}
\bibinfo{author}{Y.~M. Lui},
\newblock \bibinfo{title}{Human gesture recognition on product manifolds},
\newblock \bibinfo{journal}{Journal of Machine Learning Research}
  \bibinfo{volume}{13} (\bibinfo{year}{2012}) \bibinfo{pages}{3297--3321}.
\bibitem[{Friderikos et~al.(2020)Friderikos, Olive, Baranger, Sagris, and
  David}]{friderikos2020grassmann}
\bibinfo{author}{O.~Friderikos}, \bibinfo{author}{M.~Olive},
  \bibinfo{author}{E.~Baranger}, \bibinfo{author}{D.~Sagris},
  \bibinfo{author}{C.~N. David},
\newblock \bibinfo{title}{A space-time {POD} basis interpolation on {G}rassmann
  manifolds for parametric simulations of rigid-viscoplastic {FEM}},
\newblock \bibinfo{journal}{MATEC Web of Conferences} \bibinfo{volume}{318}
  (\bibinfo{year}{2020}) \bibinfo{pages}{01043}.
  \DOIprefix\doi{10.1051/matecconf/202031801043}.
\bibitem[{Friderikos et~al.(2021)Friderikos, Olive, Baranger, Sagris, and
  David}]{friderikos2021spacetime}
\bibinfo{author}{O.~Friderikos}, \bibinfo{author}{M.~Olive},
  \bibinfo{author}{E.~Baranger}, \bibinfo{author}{D.~Sagris},
  \bibinfo{author}{C.~David},
\newblock \bibinfo{title}{A non-intrusive space-time interpolation from compact
  {S}tiefel manifolds of parametrized rigid-viscoplastic {FEM} problems},
\newblock \bibinfo{journal}{Computational Mechanics} \bibinfo{volume}{68}
  (\bibinfo{year}{2021}) \bibinfo{pages}{861--883}.
  \DOIprefix\doi{10.1007/s00466-021-02050-0}.
\bibitem[{Zhou et~al.(2013)Zhou, Li, and Zhu}]{Zhou2013}
\bibinfo{author}{H.~Zhou}, \bibinfo{author}{L.~Li}, \bibinfo{author}{H.~Zhu},
\newblock \bibinfo{title}{Tensor regression with applications in neuroimaging
  data analysis},
\newblock \bibinfo{journal}{Journal of the American Statistical Association}
  \bibinfo{volume}{108} (\bibinfo{year}{2013}) \bibinfo{pages}{540--552}.
  \DOIprefix\doi{10.1080/01621459.2013.776499}.
\bibitem[{Li et~al.(2018)Li, Xu, Zhou, and Li}]{Li2018TuckerRegression}
\bibinfo{author}{X.~Li}, \bibinfo{author}{D.~Xu}, \bibinfo{author}{H.~Zhou},
  \bibinfo{author}{L.~Li},
\newblock \bibinfo{title}{Tucker tensor regression and neuroimaging analysis},
\newblock \bibinfo{journal}{Statistics in Biosciences} \bibinfo{volume}{10}
  (\bibinfo{year}{2018}) \bibinfo{pages}{520--545}.
  \DOIprefix\doi{10.1007/s12561-018-9215-6}.
\bibitem[{Fang et~al.(2019)Fang, Paynabar, and Gebraeel}]{fang2017image}
\bibinfo{author}{X.~Fang}, \bibinfo{author}{K.~Paynabar},
  \bibinfo{author}{N.~Gebraeel},
\newblock \bibinfo{title}{Image-based prognostics using penalized tensor
  regression},
\newblock \bibinfo{journal}{Technometrics} \bibinfo{volume}{61}
  (\bibinfo{year}{2019}) \bibinfo{pages}{369--384}.
  \DOIprefix\doi{10.1080/00401706.2018.1527727}.
\bibitem[{Chinesta et~al.(2011)Chinesta, Ladev{\`e}ze, and
  Cueto}]{chinesta2011pgd}
\bibinfo{author}{F.~Chinesta}, \bibinfo{author}{P.~Ladev{\`e}ze},
  \bibinfo{author}{E.~Cueto},
\newblock \bibinfo{title}{A short review on model order reduction based on
  proper generalized decomposition},
\newblock \bibinfo{journal}{Archives of Computational Methods in Engineering}
  \bibinfo{volume}{18} (\bibinfo{year}{2011}) \bibinfo{pages}{395--404}.
  \DOIprefix\doi{10.1007/s11831-011-9064-7}.
\bibitem[{Zhang et~al.(2022)Zhang, Lu, Tang, and Liu}]{zhang2022hidennpgd}
\bibinfo{author}{L.~Zhang}, \bibinfo{author}{Y.~Lu}, \bibinfo{author}{S.~Tang},
  \bibinfo{author}{W.~K. Liu},
\newblock \bibinfo{title}{{HiDeNN-PGD}: Reduced-order hierarchical deep
  learning neural networks},
\newblock \bibinfo{journal}{Computer Methods in Applied Mechanics and
  Engineering} \bibinfo{volume}{389} (\bibinfo{year}{2022})
  \bibinfo{pages}{114414}. \DOIprefix\doi{10.1016/j.cma.2021.114414}.
\bibitem[{Lu et~al.(2023)Lu, Li, Zhang, Park, Mojumder, Knapik, Sang, Tang,
  Apley, Wagner, and Liu}]{lu2023chidenn}
\bibinfo{author}{Y.~Lu}, \bibinfo{author}{H.~Li}, \bibinfo{author}{L.~Zhang},
  \bibinfo{author}{C.~Park}, \bibinfo{author}{S.~Mojumder},
  \bibinfo{author}{S.~Knapik}, \bibinfo{author}{Z.~Sang},
  \bibinfo{author}{S.~Tang}, \bibinfo{author}{D.~W. Apley},
  \bibinfo{author}{G.~J. Wagner}, \bibinfo{author}{W.~K. Liu},
\newblock \bibinfo{title}{Convolution hierarchical deep-learning neural
  networks ({C-HiDeNN}): Finite elements, isogeometric analysis, tensor
  decomposition, and beyond},
\newblock \bibinfo{journal}{Computational Mechanics} \bibinfo{volume}{72}
  (\bibinfo{year}{2023}) \bibinfo{pages}{333--362}.
  \DOIprefix\doi{10.1007/s00466-023-02336-5}.
\bibitem[{Park et~al.(2025)Park, Saha, Guo, Zhang, Xie, Bessa, Qian, Chen,
  Wagner, Cao, Hughes, and Liu}]{park2025inn}
\bibinfo{author}{C.~Park}, \bibinfo{author}{S.~Saha}, \bibinfo{author}{J.~Guo},
  \bibinfo{author}{H.~Zhang}, \bibinfo{author}{X.~Xie}, \bibinfo{author}{M.~A.
  Bessa}, \bibinfo{author}{D.~Qian}, \bibinfo{author}{W.~Chen},
  \bibinfo{author}{G.~J. Wagner}, \bibinfo{author}{J.~Cao},
  \bibinfo{author}{T.~J.~R. Hughes}, \bibinfo{author}{W.~K. Liu},
\newblock \bibinfo{title}{Unifying machine learning and interpolation theory
  via interpolating neural networks},
\newblock \bibinfo{journal}{Nature Communications} \bibinfo{volume}{16}
  (\bibinfo{year}{2025}) \bibinfo{pages}{8753}.
  \DOIprefix\doi{10.1038/s41467-025-63790-8}.
\bibitem[{Guo et~al.(2025{\natexlab{a}})Guo, Xie, Park, Zhang, Politis, Domel,
  and Liu}]{guo2025inntd}
\bibinfo{author}{J.~Guo}, \bibinfo{author}{X.~Xie}, \bibinfo{author}{C.~Park},
  \bibinfo{author}{H.~Zhang}, \bibinfo{author}{M.~J. Politis},
  \bibinfo{author}{G.~Domel}, \bibinfo{author}{W.~K. Liu},
\newblock \bibinfo{title}{Interpolating neural network-tensor decomposition
  ({INN-TD}): A scalable and interpretable approach for large-scale
  physics-based problems},
\newblock in: \bibinfo{booktitle}{Proceedings of the 42nd International
  Conference on Machine Learning}, volume \bibinfo{volume}{267} of
  \textit{\bibinfo{series}{Proceedings of Machine Learning Research}},
  \bibinfo{publisher}{PMLR}, \bibinfo{year}{2025}{\natexlab{a}}, pp.
  \bibinfo{pages}{21138--21162}.
\bibitem[{Guo et~al.(2025{\natexlab{b}})Guo, Domel, Park, Zhang, Gumus, Lu,
  Wagner, Qian, Cao, Hughes, and Liu}]{guo2025taps}
\bibinfo{author}{J.~Guo}, \bibinfo{author}{G.~Domel},
  \bibinfo{author}{C.~Park}, \bibinfo{author}{H.~Zhang}, \bibinfo{author}{O.~C.
  Gumus}, \bibinfo{author}{Y.~Lu}, \bibinfo{author}{G.~J. Wagner},
  \bibinfo{author}{D.~Qian}, \bibinfo{author}{J.~Cao},
  \bibinfo{author}{T.~J.~R. Hughes}, \bibinfo{author}{W.~K. Liu},
\newblock \bibinfo{title}{Tensor-decomposition-based a priori surrogate
  ({TAPS}) modeling for ultra large-scale simulations},
\newblock \bibinfo{journal}{Computer Methods in Applied Mechanics and
  Engineering} \bibinfo{volume}{444} (\bibinfo{year}{2025}{\natexlab{b}})
  \bibinfo{pages}{118101}. \DOIprefix\doi{10.1016/j.cma.2025.118101}.
\bibitem[{Koch and Lubich(2010)}]{koch2010dynamical}
\bibinfo{author}{O.~Koch}, \bibinfo{author}{C.~Lubich},
\newblock \bibinfo{title}{Dynamical tensor approximation},
\newblock \bibinfo{journal}{SIAM Journal on Matrix Analysis and Applications}
  \bibinfo{volume}{31} (\bibinfo{year}{2010}) \bibinfo{pages}{2360--2375}.
  \DOIprefix\doi{10.1137/09076578X}.
\bibitem[{Uschmajew and Vandereycken(2013)}]{uschmajew2013geometry}
\bibinfo{author}{A.~Uschmajew}, \bibinfo{author}{B.~Vandereycken},
\newblock \bibinfo{title}{The geometry of algorithms using hierarchical
  tensors},
\newblock \bibinfo{journal}{Linear Algebra and its Applications}
  \bibinfo{volume}{439} (\bibinfo{year}{2013}) \bibinfo{pages}{133--166}.
  \DOIprefix\doi{10.1016/j.laa.2013.03.016}.
\bibitem[{Goutaudier et~al.(2023)Goutaudier, Nobile, and
  Schiffmann}]{goutaudier2023orbi}
\bibinfo{author}{D.~Goutaudier}, \bibinfo{author}{F.~Nobile},
  \bibinfo{author}{J.~Schiffmann},
\newblock \bibinfo{title}{A new method to interpolate {POD} reduced bases:
  Application to the parametric model order reduction of a gas bearings
  supported rotor},
\newblock \bibinfo{journal}{International Journal for Numerical Methods in
  Engineering} \bibinfo{volume}{124} (\bibinfo{year}{2023})
  \bibinfo{pages}{4141--4170}. \DOIprefix\doi{10.1002/nme.7305}.
\bibitem[{Du and Zhang(2026)}]{du2026mrpwi}
\bibinfo{author}{L.~Du}, \bibinfo{author}{S.~Zhang},
  \bibinfo{title}{Mode-realigned pointwise interpolation ({MRPWI}) for
  efficient {POD}-{G}alerkin parametric reduced-order models},
  \bibinfo{year}{2026}. \href{http://arxiv.org/abs/2604.25955}{{\tt
  arXiv:2604.25955}}.
\bibitem[{Lyu et~al.(2026)Lyu, Yu, Kim, Zhou, Park, and Shi}]{lyu2026diffats}
\bibinfo{author}{J.~Lyu}, \bibinfo{author}{T.~Yu}, \bibinfo{author}{B.~Kim},
  \bibinfo{author}{L.~Zhou}, \bibinfo{author}{C.~Park},
  \bibinfo{author}{N.~Shi}, \bibinfo{title}{{DiffATS}: Diffusion in aligned
  tensor space}, \bibinfo{year}{2026}.
  \href{http://arxiv.org/abs/2605.09275}{{\tt arXiv:2605.09275}}.
\bibitem[{Karcher(1977)}]{karcher1977}
\bibinfo{author}{H.~Karcher},
\newblock \bibinfo{title}{Riemannian center of mass and mollifier smoothing},
\newblock \bibinfo{journal}{Communications on Pure and Applied Mathematics}
  \bibinfo{volume}{30} (\bibinfo{year}{1977}) \bibinfo{pages}{509--541}.
  \DOIprefix\doi{10.1002/cpa.3160300502}.
\bibitem[{Afsari(2011)}]{afsari2011}
\bibinfo{author}{B.~Afsari},
\newblock \bibinfo{title}{Riemannian {$L^p$} center of mass: Existence,
  uniqueness, and convexity},
\newblock \bibinfo{journal}{Proceedings of the American Mathematical Society}
  \bibinfo{volume}{139} (\bibinfo{year}{2011}) \bibinfo{pages}{655--673}.
  \DOIprefix\doi{10.1090/S0002-9939-2010-10541-5}.
\bibitem[{Gower and Dijksterhuis(2004)}]{gower2004procrustes}
\bibinfo{author}{J.~C. Gower}, \bibinfo{author}{G.~B. Dijksterhuis},
  \bibinfo{title}{Procrustes Problems}, volume~\bibinfo{volume}{30} of
  \textit{\bibinfo{series}{Oxford Statistical Science Series}},
  \bibinfo{publisher}{Oxford University Press}, \bibinfo{address}{Oxford},
  \bibinfo{year}{2004}.
  \DOIprefix\doi{10.1093/acprof:oso/9780198510581.001.0001}.
\bibitem[{Arlot and Celisse(2010)}]{arlot2010survey}
\bibinfo{author}{S.~Arlot}, \bibinfo{author}{A.~Celisse},
\newblock \bibinfo{title}{A survey of cross-validation procedures for model
  selection},
\newblock \bibinfo{journal}{Statistics Surveys} \bibinfo{volume}{4}
  (\bibinfo{year}{2010}) \bibinfo{pages}{40--79}.
  \DOIprefix\doi{10.1214/09-SS054}.
\bibitem[{Hackbusch(2012)}]{Hackbusch2012}
\bibinfo{author}{W.~Hackbusch}, \bibinfo{title}{Tensor Spaces and Numerical
  Tensor Calculus}, volume~\bibinfo{volume}{42} of
  \textit{\bibinfo{series}{Springer Series in Computational Mathematics}},
  \bibinfo{publisher}{Springer}, \bibinfo{address}{Berlin, Heidelberg},
  \bibinfo{year}{2012}. \DOIprefix\doi{10.1007/978-3-642-28027-6}.
\bibitem[{Luo et~al.(2023{\natexlab{a}})Luo, Chen, and Zhang}]{luo2023cfdbench}
\bibinfo{author}{Y.~Luo}, \bibinfo{author}{Y.~Chen},
  \bibinfo{author}{Z.~Zhang}, \bibinfo{title}{{CFDB}ench: A large-scale
  benchmark for machine learning methods in fluid dynamics},
  \bibinfo{year}{2023}{\natexlab{a}}.
  \href{http://arxiv.org/abs/2310.05963}{{\tt arXiv:2310.05963}}.
\bibitem[{Luo et~al.(2023{\natexlab{b}})Luo, Chen, and
  Zhang}]{cfdbench2023data}
\bibinfo{author}{Y.~Luo}, \bibinfo{author}{Y.~Chen},
  \bibinfo{author}{Z.~Zhang}, \bibinfo{title}{{CFDB}ench dataset},
  \bibinfo{howpublished}{Hugging Face Datasets,
  \url{https://huggingface.co/datasets/chen-yingfa/CFDBench}},
  \bibinfo{year}{2023}{\natexlab{b}}. \bibinfo{note}{Apache-2.0 license,
  revision 09d3f237, accessed 24 September 2026}.
\bibitem[{Eckart and Young(1936)}]{Eckart1936}
\bibinfo{author}{C.~Eckart}, \bibinfo{author}{G.~Young},
\newblock \bibinfo{title}{The approximation of one matrix by another of lower
  rank},
\newblock \bibinfo{journal}{Psychometrika} \bibinfo{volume}{1}
  (\bibinfo{year}{1936}) \bibinfo{pages}{211--218}.
  \DOIprefix\doi{10.1007/BF02288367}.
\bibitem[{Halko et~al.(2011)Halko, Martinsson, and Tropp}]{halko2011randomized}
\bibinfo{author}{N.~Halko}, \bibinfo{author}{P.-G. Martinsson},
  \bibinfo{author}{J.~A. Tropp},
\newblock \bibinfo{title}{Finding structure with randomness: Probabilistic
  algorithms for constructing approximate matrix decompositions},
\newblock \bibinfo{journal}{SIAM Review} \bibinfo{volume}{53}
  (\bibinfo{year}{2011}) \bibinfo{pages}{217--288}.
  \DOIprefix\doi{10.1137/090771806}.
\bibitem[{Reisi~Gahrooei et~al.(2021)Reisi~Gahrooei, Yan, Paynabar, and
  Shi}]{reisi2021mtot}
\bibinfo{author}{M.~Reisi~Gahrooei}, \bibinfo{author}{H.~Yan},
  \bibinfo{author}{K.~Paynabar}, \bibinfo{author}{J.~Shi},
\newblock \bibinfo{title}{Multiple tensor-on-tensor regression: An approach for
  modeling processes with heterogeneous sources of data},
\newblock \bibinfo{journal}{Technometrics} \bibinfo{volume}{63}
  (\bibinfo{year}{2021}) \bibinfo{pages}{147--159}.
  \DOIprefix\doi{10.1080/00401706.2019.1708463}.

\end{thebibliography}
\end{document}